\newcommand{\bR}{\mathbb{/R}}
\newcommand{\dom}{\mathrm{dom}}
\newcommand{\relint}{\mathrm{relint }\,}
\newcommand{\bdd}{\mathrm{bdd \,}}
\begin{document}

\markboth{Bachrach et al.}{Optimising Trade-offs Among Stakeholders in Ad Auctions}

\title{Optimising Trade-offs Among Stakeholders in Ad Auctions}
\author{
YORAM BACHRACH
\affil{Microsoft Research}
SOFIA CEPPI
\affil{Microsoft Research}
IAN A. KASH
\affil{Microsoft Research}
PETER KEY
\affil{Microsoft Research}
DAVID KUROKAWA
\affil{Carnegie Mellon University}}

\begin{abstract}
We examine trade-offs among  stakeholders in ad auctions. Our metrics are the revenue for the utility of the auctioneer, the number of clicks for the utility of the users and the welfare for the utility of the advertisers. We show how to optimize linear combinations of the stakeholder utilities, showing that these can be tackled through a GSP auction with a per-click reserve price. We then examine constrained optimization of stakeholder utilities. 

We use simulations and analysis of real-world sponsored search auction data to demonstrate the feasible trade-offs, examining the effect of changing the allowed number of ads on the utilities of the stakeholders. We investigate both short term effects, when the players do not have the time to modify their behavior, and long term equilibrium conditions.

Finally, we examine a combinatorially richer constrained optimization problem, where there are several possible allowed configurations (templates) of ad formats. This model captures richer ad formats, which allow using the available screen real estate in various ways. 
We show that two natural generalizations of the GSP auction rules to this domain are poorly behaved, resulting in not having a symmetric Nash equilibrium or having one with poor welfare. We also provide positive results for restricted cases.
\end{abstract}

\category{J.4}{Computer Applications}{Social and Behavioral Sciences - Economics}

\terms{Economics, Theory, Experimentation}

\keywords{Sponsored search; keyword auctions; generalized second price; reserve price; trade-offs}


\begin{bottomstuff}
Author's addresses: Y. Bachrach, S. Ceppi, I. Kash, and P. Key Microsoft Research Cambridge; D. Kurokawa, Computer Science Department, Carnegie Mellon University.
\end{bottomstuff}

\maketitle


\section{Introduction}

Studies of auctions tend to focus on one particular objective, with the goal of optimizing it.  For example, the VCG auction maximizes welfare, while Myerson~\citeyear{myerson1981optimal} studied the design of auctions that maximize revenue in a single parameter setting.  However, for designers of practical auctions, such pure objectives are rarely the goal.  Instead, the auction designer, even if caring only about long-term revenue, must take into account various aspects of the health of the marketplace.

A notable example is that of search advertising auctions, where the opportunity to advertise alongside search results is auctioned off.  There, the search engine wants to maximize revenue, but faces competition from both other search engines and other advertising outlets. As a result, maximizing revenue requires striking a balance between extracting revenue from current advertisers and keeping them or attracting more.  A classic result of Bulow and Klemperer~\citeyear{bulow96} shows that, in some cases, attracting even a single additional bidder can be as valuable as perfectly optimizing revenue.  Given this, the auctioneer should take at least some consideration of the welfare of advertisers.

At the same time, a search engine is a two-sided platform where the good being sold to advertisers is the attention of searchers.  Decisions that maximize revenue could result in a poor user experience if more or worse ads are shown.  As searchers have other options, the supply of goods to be sold depends on how satisfied they are.  Thus, the auctioneer should take their welfare into account as well.

While auction theory provides a rich set of tools for optimizing the welfare of a single group, much less attention has been paid to optimizing tradeoffs among multiple stakeholders.  Better understanding this, in the particular context of sponsored search, is the goal of this paper.

\subsection{Our Contribution}

We begin by analyzing the optimization of a linear combination of objectives relevant to the search engine (revenue), advertisers (welfare), and users (clicks).  We provide an auction design that optimizes any objective that is {\em linear} in the click probability of an advertiser, of which each of these three objectives (and any linear combination of them) is a special case.  Like current practice, this auction uses a rank score to order ads and prices can be computed in a truthful or Generalized Second Price (GSP) fashion.  Importantly, this auction achieves trade-offs between revenue and other objectives through the use of a per-click reserve.

Next, we approach the same problem through the lens of constrained optimization.  Rather than optimize a particular combination, we could constrain some of the objectives to lie above some minimal value while optimizing the remaining ones.  This may be a more natural approach in practice, since it is not clear how the various objectives should be weighted, but given a current state of the marketplace it is natural to ask how to  make the system one percent better for users while holding revenue neutral.  Perhaps unsurprisingly, we are able to show that this problem is equivalent to the unconstrained problem via Lagrangean duality (the analysis is not trivial because the object we seek is an optimal {\em function}).

Of course, these three objectives are not the only natural ones for an auction designer to consider.  For example, using the number of clicks (click yield) treats a deceptive advertisement that tricks people into clicking on it as beneficially relevant for users.  An alternative that fits within our constrained optimization framework is to constrain the number of ads shown (impression yield), so that only the ``best'' are shown to users.  We characterize the design of the optimal option with this constraint.  Interestingly, the optimal way to ensure this constraint is satisfied through a per-impression reserve, which previous work has identified as a poor tool from a pure revenue maximization standpoint~\cite{roberts13,thompson13}.

To complement this theoretical exploration, we follow~\cite{roberts13} and examine the relative performance of the optimal mechanism to various designs that have previously been considered through extensive experiments on both synthetic and real data. The experiments on synthetic data confirm the theoretical results: a ranking algorithm with a per-impression reserve usually outperforms the other evaluated ranking algorithms in terms of revenue, welfare, and click yield. For the second part of the experiments, we use historical data from Microsoft Bing for a keyword with over 500 bidders and for a keyword with fewer than 10 bidders. In the former case, all the ranking algorithms perform similarly for high values of impression yield while for low values of impression yield the algorithm with a per-impression reserve provides a higher revenue. For the latter case, the algorithm with the per-impression reserve provides an essentially constant revenue while controlling the number of ads shown.  This is higher for low numbers of ads and lower for high numbers of ads. Finally, on a sample from a week’s worth of data across all keywords on Bing, we observe opposite results when trading off between click and impression versus revenue and impression.

Finally, we consider the trade-offs created due to the existence of several different types of ads. For example, instead of displaying several text ads the search engine may at times wish to display one large image-based ad or a block of ads for different products that can be purchased. This setting diverges from classical theory in two ways. First, the set of advertisers in the two camps are not necessarily equivalent (e.g. an advertiser may only desire a placement for image-based ads). Second, the search engine's available slots are no longer pre-set and can change based off the bids. We will then show that the truthful auction in this setting is largely analogous to the standard setting, but attempting to utilize GSP payment rules results in significant complications.  Our results here are quite negative, with GSP being poorly behaved except in restricted cases.  As both Google and Bing use GSP, identifying a solution to this problem is an important direction for future work.

\subsection{Related Work}

Our work sits in a long line of papers that study generalized second price auctions for sponsored search~\cite{varian07,eos07}.  Ostrovsky and Schwarz~\citeyear{ostrovsky11} studied the effects of applying Myerson's~\citeyear{myerson1981optimal} optimal per-click reserve on Yahoo!  (For practical reasons they actually implemented per-impression reserve.)  Lahaie and Pennock~\citeyear{lahaie07} proposed the idea of a squashing parameter to improve revenue, and Lahaie and McAfee~\citeyear{lahaie11} showed that it can increase welfare as well.  Thompson and Leyton-Brown~\citeyear{thompson13} studied a variety of ways of increasing revenue, including through a reserve of the optimal form.

In the study of trade-offs, Likhodedov and Sandholm~\citeyear{likhodedov03} showed how to optimize a combination of revenue and welfare. Diakonikolas et al.~\citeyear{Diakonikolas2012} explore the trade-off revenue/welfare focusing the attention on computational complexity issues for both exact and approximated deterministic mechanisms. The convex combination of revenue and welfare has also been considered in a study about refinements in the prediction of the relevance of ads done by Sundararajan and Talgam-Cohen~\citeyear{Sundararajan2013}.  Liu and Chen~\citeyear{liu06} and Liu et al.~\citeyear{liu10} compared the designs of revenue-optimal and welfare-optimal auctions in a simple setting.  Edelman and Schwarz~\citeyear{edelman10} argued that setting an optimal reserve price will lead to a significant revenue gain with minimal welfare loss.  Athey and Ellison~\citeyear{athey11} studied a model of consumer search and showed that reserve prices can increase user welfare.  Roberts et al.~\citeyear{roberts13} studied the revenue optimal auction and showed that empirically it led to good trade-offs between revenue and other objectives, but did not have a theoretical explanation for this.  They also showed that symmetric Nash equilibra continue to exist when per-impression reserves, per-click reserves of the optimal form, and other linear alterations to the ranking are performed.  

Our motivations for considering trade-offs are driven by issues of endogenous participation.  Jehiel and Lamy~\citeyear{jehiel13} examine a general model of this problem in auctions, and their application of their results to sponsored search settings yields a qualitatively similar result to ours: reserves prices should be set, but not at as high a level as in a setting where participation is exogenous.

\section{Preliminaries}

We now describe the setting  and notation we use throughout the paper.  
Our setting is a standard sponsored search ad auction or a position auction, where there are $n$ bidders or advertisers;  bidder $i$ submits a bid $b_i$ for their ad to be displayed, and a displayed ad that is clicked is charged a price $p_i$. 

We use the following notation:
\begin{itemize}
\item $b = (b_1, b_2, ..., b_n)$: the vector of bids.
\item $t_i \in [c_i, d_i]$: the type of bidder $i$ (their true value of a click),  which can take a value in the compact set $[c_i, d_i]$.
\item $T$: the set of valid bidder types/bids vectors (i.e. $b, (t_1, t_2, ..., t_n) \in A$).
\item Types are independently distributed with density function $f_i$ for bidder $i$,  giving density $f(t)=\prod_i f_i (t_i)$.
\item  $\phi_i : \mathbb{R}_{\geq 0} \rightarrow \mathbb{R}$, the associated virtual value function of the bidder $i$, assumed to be regular.  I.e. $\phi_i(t_i) = t_i - (1 - F(t_i))/f(t_i)$ is non-decreasing.
\item $w_i$: the ad effect of bidder $i$.
\item $x_i : T \rightarrow \mathbb{R}^n$: the allocation rule chosen. This represents the expected slot effect given to bidder $i$.
\item $t_iw_ix_i$ the expected welfare of advertiser $i$, determined as his value for a click multiplied by his probability of a click (factored into an ad effect and a slot effect).
\item $p_i : T \rightarrow \mathbb{R}^n$: the payment function of bidder $i$.
\end{itemize}
As we will be looking at linear objectives and constraints, we introduce:
\begin{itemize}
\item  $OBJ = \alpha\ \text{revenue} + \beta\ \text{welfare} + \gamma\ \text{click yield}$: the objective we will frequently maximize in our auctions. 
We assume $\alpha, \beta, \gamma \geq 0$ and $\alpha + \beta > 0$ (the case of $\alpha = \beta = 0$ is largely trivial and due to its cumbersome edge-case nature will not be considered).
\item $\psi_i(z) = \alpha \phi_i(z) + \beta z + \gamma$: here the $\alpha, \beta, \gamma$ are equivalent to those used in $OBJ$.
\end{itemize}

To define an auction mechanism the designer must specify both an allocation rule (or ranking algorithm) and a payment function.
We use ranking algorithms based on the parameters described above that are {\em monotone}, i.e., the expected valuation of each advertiser does not decrease when their type increases, and payments that provide the ``right'' incentives to advertisers to report bids that correspond to their types. We can obtain such payments in two ways: a) since the ranking algorithm is monotone, we can compute the payments as prescribed by Myerson~\citeyear{myerson1981optimal}, yielding a truthful mechanism; b) We can use GSP payments. In the latter case, we need a solution concept that characterizes a {\em single} expected outcome, and apply the commonly used one:  \emph{symmetric} or \emph{locally envy free} Nash equilibrium (SNE)~\cite{eos07,varian07}. An SNE is a refinement of a Nash equilibrium requiring that for every pair of advertisers $i$ and $j$, the following constraint is satisfied:
\[
x_j(b)\left(t_i - \frac{w_j}{w_i}p_j(b)\right) \leq x_i(b)(t_i - p_i(b))
\]
Intuitively, this says that $i$ prefers their slot and payment to $j$'s slot and payment.
Previous work has shown that with GSP payments the ``lowest'' SNE corresponds to the truthful outcome, in the sense that each bidder ends up with the same slot and payment, for a broad class of allocation rules~\cite{aggarwal06,eos07,varian07,roberts13}.

\section{Unconstrained Optimization}
\label{sec:linear_obj}

We first begin with auctions that solve an (essentially) unconstrained optimization problem%
\footnote{Strictly speaking there is a constraint that the allocation rule be monotone, but for the cases we examine the optimal solution has this property so we can ignore this constraint and perform unconstrained optimization.}.
The natural problem of this form is to assign advertisers to slots to maximize some combination of objectives such as revenue, welfare, and click yield.  A key observation, previously used in a different setting by Likhodedov and Sandholm~\citeyear{likhodedov03}, is that all of these objectives are linear in the allocation.  Thus they can naturally be folded together to yield an optimal auction by using a ``rank score'' that takes the appropriate combination of them.  The following theorem captures this for our objective function of the form $OBJ = \alpha\ \text{revenue} + \beta\ \text{welfare} + \gamma\ \text{click yield}$ (where $\alpha,\beta,\gamma \geq 0$). This allows trading off measures of the utility of the search platform, the advertisers, and the users.  Of course, any other linear term could easily be included.

\begin{theorem}
\label{thm:opt}
The auction that maximizes $OBJ$ maximizes the equivalent objective
\begin{equation*}
\int_T \sum_{i = 1}^n w_i \psi_i(t_i) x_i(t) f(t) dt.
\end{equation*}
Thus, the truthful auction that maximizes for any instance of types/bids $t$, the expression $\sum_{i = 1}^n w_i \psi_i(t_i) x_i(t)$ is optimal.
\end{theorem}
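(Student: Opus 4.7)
The plan is to rewrite each of the three components of $OBJ$ as an integral against the allocation rule, collect the resulting coefficients into $\psi_i$, and then argue that pointwise maximization of the integrand gives the optimal mechanism.

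First, I would handle the three terms separately. For welfare and click yield the computation is direct: since $t_i w_i x_i(t)$ is the expected welfare contribution of bidder $i$ and $w_i x_i(t)$ is their expected click contribution, integrating against $f(t)$ gives
\begin{equation*}
\beta\,\text{welfare} + \gamma\,\text{click yield} = \int_T \sum_{i=1}^n (\beta t_i + \gamma)\, w_i x_i(t)\, f(t)\, dt.
\end{equation*}
For revenue I would invoke Myerson's lemma: any Bayesian incentive compatible and interim individually rational mechanism with monotone allocation has expected revenue equal to expected virtual surplus, so
\begin{equation*}
\alpha\,\text{revenue} = \int_T \sum_{i=1}^n \alpha\, \phi_i(t_i)\, w_i x_i(t)\, f(t)\, dt,
\end{equation*}
where I use that the probability that $i$ gets a click is $w_i x_i(t)$. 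Adding the three expressions collapses the coefficients into $\alpha \phi_i(t_i) + \beta t_i + \gamma = \psi_i(t_i)$, which is the claimed integral representation.

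Next, since $f(t) \geq 0$ and the allocation decisions at different type profiles are independent (subject only to the feasibility constraints on $x$ at each $t$), the integral is maximized by maximizing the integrand $\sum_i w_i \psi_i(t_i) x_i(t)$ pointwise at each $t$. This justifies the rank-score rule that assigns slots so as to maximize $\sum_i w_i \psi_i(t_i) x_i(t)$ for the reported bids. Charging Myerson payments makes the mechanism truthful, so the replacement of $t$ by $b$ is without loss.

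The main obstacle, and really the only subtle point, is justifying that Myerson's envelope formula can be applied; this requires that the allocation rule chosen by pointwise maximization be monotone in each bidder's type (as flagged in the footnote). I would discharge this by observing that $\phi_i$ is non-decreasing by the regularity assumption, the map $z \mapsto \beta z$ is non-decreasing for $\beta \geq 0$, and $\gamma$ is constant, so $\psi_i$ is non-decreasing; thus increasing $t_i$ can only increase $w_i \psi_i(t_i)$ relative to the competing scores, and the allocation given by pointwise maximization is monotone in $t_i$. With monotonicity in hand the earlier invocation of Myerson's lemma is legitimate and the theorem follows.
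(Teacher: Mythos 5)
Your proposal is correct and follows essentially the same route as the paper: decompose $OBJ$ term by term, invoke Myerson's characterization of expected revenue as expected virtual surplus, collect the coefficients into $\psi_i$, maximize the integrand pointwise, and verify monotonicity from the regularity of $\phi_i$ together with $\beta, \gamma \geq 0$. The only cosmetic difference is that the paper retains the boundary term $-\sum_i w_i U_i(x,p,c_i)$ from Myerson's lemma and then argues it is optimally set to zero, whereas you fold that step into your (slightly over-strong) statement of the lemma.
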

\begin{proof}
We first note that $\alpha, \beta, \gamma$ are all non-negative and that each of the function components are monotone, resulting in monotonicity. 

Let us now consider each of the three terms separately.
\begin{enumerate}
\item revenue:
following Myerson~\citeyear{myerson1981optimal}, we have that 
\begin{align*}
\int_T \sum_{i = 1}^n w_i p_i(t) f(t) dt
= - \sum_{i = 1}^n w_i U_i(x, p, c_i) + \int_T \sum_{i = 1}^n w_i \phi_i(t_i) x_i(t) f(t) dt
\end{align*}
where $U_i$ is the expected utility of bidder $i$ when bidding $c_i$ in the auction described by $x$ and $p$ as in~\cite{myerson1981optimal}.

\item welfare:
\begin{align*}
\int_T \sum_{i = 1}^n w_i t_i x_i(t) f(t) dt
\end{align*}

\item click yield:
\begin{align*}
\int_T \sum_{i = 1}^n w_i x_i(t) f(t) dt
\end{align*}
\end{enumerate}
Combining the above, we get that:
\begin{align*}
OBJ
&= - \alpha\sum_{i = 1}^n w_i U_i(x, p, c_i) + \int_T \sum_{i = 1}^n w_i \left(\alpha \phi_i(t_i) + \beta t_i + \gamma\right) x_i(t) f(t) dt.
\end{align*}
Again following~\cite{myerson1981optimal}, the first term can be set to $0$ by setting the payment $p$ accordingly, and as feasible direct revelation mechanisms must have that $U_i(p, x, c_i) \geq 0$, we have that this is optimal. Therefore, to optimize $OBJ$, we need only optimize the second term.
\end{proof}

Note that similarly to the revenue-only case, Theorem~\ref{thm:opt} can clearly correspond to a ranking based on $w_i \psi_i(t_i)$. In the case that the distributions across types are identical, the $\phi_i$ are equivalent to some $\phi$, and in turn the $\psi_i$ are equivalent to some $\psi$. A single bidder-independent reserve price $r$ can then be implemented where $r = \psi^{-1}(0)$.

\subsection{Implications for GSP}
\label{sec:linear-approx}

The analysis in Theorem~\ref{thm:opt} assumes that advertising is sold using a truthful auction.  However, the major sponsored search platforms, such as Google and Bing's ad platforms, implement auctions that are {\em not} truthful.  Instead of a truthful payment rule, these platforms use the GSP payment rule (that is, a bidder pays the value of the minimum bid required to retain the slot allocated to them). For many purposes this distinction is not important, as there exists an equilibrium of the GSP auction that implements the optimal truthful outcome, in the sense that the allocations and payments for all advertisers are the same~\cite{eos07,varian07,roberts13}.  However, this analysis has been shown to work for rankings which are {\em linear} in bids (see in particular \cite{roberts13}).  This is important because, even in the regular case, the virtual valuation functions $\phi_i$ need not be linear, and it is not clear that this good behavior extends to this setting.  Nevertheless, Theorem~\ref{thm:opt} can at the very least be applied to optimizing linear approximation of the $\phi_i$.  Thompson and Leyton-Brown~\citeyear{thompson13} observed that $\phi_i$ is in fact linear for uniform value distributions. In fact, many Beta distributions are also close to linear for much of their range.


\section{Constrained Optimization}
\label{sec-constrained}

Section~\ref{sec:linear_obj} allowed linear trade-offs between the utility measures of the different interested parties, but did not allow an objective function that requires a minimal threshold utility for them. We now consider objective functions that allow setting such a minimal threshold. One example is designing a revenue optimal auction under the constraint that the social welfare exceeds a minimal threshold value of $\theta$.   More generally, we allow several constraints  of the form $\alpha_k\ \text{revenue} + \beta_k\ \text{welfare} + \gamma_k\ \text{click yield} \geq  \theta_k$.
 
Note  that the equivalent objective function derived in Theorem \ref{thm:opt}  is a linear functional on $x = (x_1, x_2, ..., x_n)$, which we can write in succinct notation as
\begin{align*}
G(x) = \int_T \sum_{i = 1}^n w_i \psi_i(t_i) x_i(t) f(t) dt.
\end{align*}
By analogy,  it follows from Theorem \ref{thm:opt}  and its proof that each   constraint can be expressed in the form $a_k (x)\geq \theta_k$ where $a_k$ is a linear functional and hence we can represent a set of constraints in the succinct form  $A(x) + Z = \theta$, where $\theta= (\theta_1, \theta_2, ..., \theta_r)$,  $Z= \{z \in \bR^{\rho(A)}: z\leq 0\} $  is the non-positive cone in the case of inequality constraints  and $Z=0$  for equality constraints.   Here $A$ is a vector of linear functionals.

Our optimizations over $x$ are in function space,  which we take to be the $L^1$  Banach space w.r.t. Lebesgue measure.  We write $\mathcal{X}$ for the feasible region for the unconstrained problem (which requires $x$ to be positive amongst other things).    Hence the constrained problem can be written as 
\begin{align*}
&\bf{P_Z(\theta)}:\\
  &\rm{Maximize}  &&  OBJ = \alpha\ \text{revenue} + \beta\ \text{welfare} + \gamma\ \text{click yield} \\
    & \rm{subject\; to}  &&  A(x)+z = \theta \\
   &   &&  x \in \mathcal{X} \\
  & \mbox{  over  $x \in L^1,  z \in Z$}.
\end{align*}

We now introduce some technical conditions, used to facilitate proofs: we assume that the allocation function $x: T \rightarrow \mathbb{R}^n$ is a Lebesgue integrable function $\in L^1$,  and we require the constraint feasibility region $\mathcal{X}$ for the allocation to be convex with respect to the underlying function space.   Note that ``natural'' requirements for sponsored search auctions translate to convex regions if we allow for randomized allocations.

The Lagrangian for the constrained problem is
 \begin{equation*}
L(x, \lambda) = G(x) - \lambda^ T (\theta - A(x)-z)
\end{equation*}
and we demonstrate that the constrained problem is strong Lagrangian, meaning strong duality applies,  and hence
 \begin{equation*}
                \inf_{\lambda} \sup_{x \in X}L(x,\lambda)=    L(x^*, \lambda^*) = \sup_{x \in \mathcal{X}} L(x, \lambda^*)=\sup_{x: x \in \mathcal{X} \& A(x)+z=\theta} G(x).
\end{equation*}•

The proof largely mirrors the standard proof for  convex optimization over Euclidean spaces,  but applied to function space.   There are two main parts of the proof:  first, we show that if $\theta$ is in the relative interior of the constrained optimization then the problem is strong Lagrangian.  That is, defining $\Gamma(\theta) = \sup_{x: x \in \mathcal{X} \& A(x)+z=\theta} G(x)$,  we show the following.

\begin{theorem}
\label{thm:strongL}
If $\theta \in \relint(\dom(\Gamma))$, then the optimization problem is strong Lagrangian.
\end{theorem}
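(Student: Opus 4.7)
The plan is to reduce the infinite-dimensional optimization to a standard finite-dimensional convex-analysis argument by working with the value function $\Gamma$, whose domain lies inside the finite-dimensional space $\mathbb{R}^{\rho(A)}$. Although $x$ itself lives in $L^1$, the only data that the dual variable $\lambda$ couples to is $A(x)$, so the separating-hyperplane work takes place in a finite-dimensional perturbation space where classical convex analysis applies cleanly.

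First I would verify that $\dom(\Gamma)$ is convex and that $\Gamma$ is concave on it. Given near-feasible pairs $(x_1,z_1)$ and $(x_2,z_2)$ witnessing values close to $\Gamma(\theta_1)$ and $\Gamma(\theta_2)$, the convex combination $(\mu x_1 + (1-\mu)x_2,\, \mu z_1 + (1-\mu)z_2)$ is feasible for $\mu \theta_1 + (1-\mu)\theta_2$: $\mathcal{X}$ is convex by assumption, $Z$ is a convex cone (or $\{0\}$), and $A$ is linear. Linearity of $G$ then yields concavity of $\Gamma$ after passing to the supremum.

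Second, because $\Gamma$ is a proper concave function on a convex subset of the finite-dimensional space $\mathbb{R}^{\rho(A)}$ and $\theta \in \relint(\dom(\Gamma))$, the standard supporting-hyperplane theorem furnishes a supergradient $\lambda^* \in \mathbb{R}^{\rho(A)}$ with
\begin{equation*}
\Gamma(\theta') \leq \Gamma(\theta) + (\lambda^*)^T(\theta' - \theta) \quad \text{for all } \theta' \in \dom(\Gamma).
\end{equation*}
In the inequality case $Z=\{z\leq 0\}$, monotonicity of $\Gamma$ (relaxing $\theta$ downward cannot hurt) forces $\lambda^* \geq 0$, which is exactly the sign that keeps $\sup_{z\in Z}(\lambda^*)^T z$ finite. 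Substituting $\theta' = A(x)+z$ for arbitrary $x\in \mathcal{X}$, $z\in Z$ and rearranging gives $L(x,\lambda^*) \leq \Gamma(\theta)$, so $\sup_{x,z}L(x,\lambda^*) \leq \Gamma(\theta)$. Weak duality yields the matching lower bound, producing $\inf_\lambda \sup_{x,z} L(x,\lambda) = \Gamma(\theta)$ and identifying $\lambda^*$ as an optimal dual multiplier, which is precisely the strong Lagrangian property.

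The main obstacle is justifying the supergradient step cleanly, since it requires $\Gamma$ to be proper (neither $+\infty$ nor $-\infty$) on a neighborhood of $\theta$ within $\dom(\Gamma)$. Boundedness above uses compactness of the type sets $[c_i,d_i]$, boundedness of the $w_i$ and slot effects, and the structure of $\mathcal{X}$ (which prevents the $L^1$ norms of allocations from blowing up); nonemptiness of the feasible set at nearby perturbations is exactly what the relative-interior assumption buys us. Once $\Gamma$ is known to be proper concave with $\theta$ in the relative interior of its effective domain, the finite-dimensionality of the image space $\mathbb{R}^{\rho(A)}$ sidesteps the usual Hahn--Banach subtleties one would face in separating convex sets in $L^1$, and the relint hypothesis plays its customary role of ruling out a vertical supporting hyperplane so that the supergradient $\lambda^*$ is a genuine element of $\mathbb{R}^{\rho(A)}$ rather than a degenerate direction.
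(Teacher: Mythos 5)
Your proof is correct and follows essentially the same route as the paper's: establish concavity of the value function $\Gamma$ on the finite-dimensional perturbation space, use $\theta \in \relint(\dom(\Gamma))$ to obtain a non-vertical supporting hyperplane (supergradient) $\lambda^*$, and combine the resulting bound $\sup_x L(x,\lambda^*) \leq \Gamma(\theta)$ with weak duality. Your use of near-optimal rather than exact maximizers in the concavity step, and your explicit attention to the sign of $\lambda^*$ and properness of $\Gamma$, are minor refinements of the same argument rather than a different approach.
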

The second part is to give the equivalent version of Slater's constraint qualification --- which gives sufficient conditions for the problem to be strong Lagrangian.
\begin{corollary}
  When $Z=\{z\in \bR^{\rho(A)}: \leq 0\} $ is the non-positive cone, the problem $P_{Z}(\theta)$, is strong Lagrangian if there is some $\tilde{x} \in \mathcal{X}$ such that $A(\tilde{x})>\theta$.  If $Z={0}$, the problem  $P_{Z}(\theta)=P(\theta)$ is strong Lagrangian provided there is some $\tilde{x} \in \relint(\mathcal{X})$ such that $A(\tilde{x})=\theta$.
\end{corollary}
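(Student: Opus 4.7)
The plan is to reduce each of the two cases to the hypothesis of Theorem~\ref{thm:strongL}, that is, to show $\theta \in \relint(\dom(\Gamma))$, so that strong duality follows immediately by invoking the theorem. The setup for both cases relies on the fact that $\dom(\Gamma)$ is a convex subset of $\mathbb{R}^{\rho(A)}$ (convexity of $\mathcal{X}$ and linearity of $A$ imply convexity of $\dom(\Gamma)$), so relative interiors are well-behaved, and $A$ being a vector of linear functionals into a finite-dimensional space allows us to reason about $\dom(\Gamma)$ in finite dimensions even though $\mathcal{X}$ sits in $L^1$.

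For the inequality case ($Z$ is the non-positive cone), I would first rewrite $\dom(\Gamma) = \{\theta' \in \mathbb{R}^{\rho(A)}: \exists\, x \in \mathcal{X}\text{ with }A(x)\geq \theta'\}$, since the slack $z \leq 0$ exactly absorbs any excess. Given $\tilde{x} \in \mathcal{X}$ with $A(\tilde{x}) > \theta$ componentwise, set $\epsilon = \min_k (A_k(\tilde{x}) - \theta_k) > 0$; then for any $\theta'$ in the open Euclidean ball of radius $\epsilon$ around $\theta$, we still have $A(\tilde{x}) \geq \theta'$, so $\theta' \in \dom(\Gamma)$. This places $\theta$ in the interior of $\dom(\Gamma)$, and since the interior is contained in the relative interior, Theorem~\ref{thm:strongL} applies.

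For the equality case ($Z = \{0\}$), the domain simplifies to $\dom(\Gamma) = A(\mathcal{X})$. The hypothesis gives $\tilde{x} \in \relint(\mathcal{X})$ with $A(\tilde{x}) = \theta$, so it suffices to invoke the standard convex-analytic fact that for a linear map $A$ and a convex set $\mathcal{X}$, one has $A(\relint(\mathcal{X})) \subseteq \relint(A(\mathcal{X}))$. This inclusion is a straightforward consequence of the line-segment characterization of $\relint$: for any $y \in A(\mathcal{X})$, write $y = A(x)$ with $x \in \mathcal{X}$, and use that $\tilde{x} \in \relint(\mathcal{X})$ to extend the segment from $x$ through $\tilde{x}$ slightly past $\tilde{x}$ within $\mathcal{X}$; applying $A$ extends the segment from $y$ through $\theta$ slightly past $\theta$ within $A(\mathcal{X})$, which is exactly the condition for $\theta \in \relint(A(\mathcal{X}))$. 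Then Theorem~\ref{thm:strongL} applies and strong duality holds.

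The main obstacle I anticipate is purely a technical one at the interface between the infinite-dimensional domain $\mathcal{X} \subseteq L^1$ and the finite-dimensional codomain $\mathbb{R}^{\rho(A)}$: the claim $A(\relint(\mathcal{X})) \subseteq \relint(A(\mathcal{X}))$ is elementary in the finite-dimensional case, but one must be careful that $\relint(\mathcal{X})$ in the equality-case hypothesis refers to the relative interior in the topology on $L^1$ used throughout the paper, and that $A$ is continuous with respect to that topology. Since $A$ is a vector of linear functionals arising from integration against bounded kernels (they came from writing revenue, welfare, and click yield as linear functionals of $x$), continuity on $L^1$ holds, and the segment-extension argument above transfers without trouble.
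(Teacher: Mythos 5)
Your proof is correct and takes essentially the same approach as the paper: both cases are reduced to Theorem~\ref{thm:strongL} by verifying $\theta \in \relint(\dom(\Gamma))$. The paper simply asserts this membership as clear, whereas you supply the ball argument for the inequality case and the inclusion $A(\relint(\mathcal{X})) \subseteq \relint(A(\mathcal{X}))$ for the equality case; these are correct elaborations of the same argument rather than a different route.
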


Appendix \ref{sec:constraint_proof} goes through the  proofs in detail.

The next theorem is then a direct consequence of the problem being strong Lagrangian, and shows that auctions which satisfy linear constraints are essentially those which have the aforementioned linear objectives.  We give the theorem for inequalities (the theorem for equalities is similarly proved).

\begin{theorem}
\label{thm:constraint}
Suppose $x^*$ maximizes $\alpha_0\ \text{revenue} + \beta_0\ \text{welfare} + \gamma_0\ \text{click yield}$ subject to the following constraints:
\begin{align*}
&\alpha_1\ \text{revenue} + \beta_1\ \text{welfare} + \gamma_1\ \text{click yield} \geq  \theta_1\\
&\alpha_2\ \text{revenue} + \beta_2\ \text{welfare} + \gamma_2\ \text{click yield} \geq \theta_2\\
&\qquad\qquad\vdots\\
&\alpha_r\ \text{revenue} + \beta_r\ \text{welfare} + \gamma_r\ \text{click yield} \geq \theta_r.
\end{align*}
Then there exists some $\alpha^*$, $\beta^*$, and $\gamma^*$ such that $x^*$ maximizes $\alpha^*\ \text{revenue} + \beta^*\ \text{welfare} + \gamma^*\ \text{click yield}$.
\end{theorem}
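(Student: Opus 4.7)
The plan is to invoke the strong Lagrangian property (Theorem~\ref{thm:strongL} together with its corollary) to reduce the constrained problem to an unconstrained one, exactly of the form handled by Theorem~\ref{thm:opt}. First I would phrase the problem in the notation of Section~\ref{sec-constrained}: write the objective as the linear functional $G_0(x) = \alpha_0\ \text{revenue} + \beta_0\ \text{welfare} + \gamma_0\ \text{click yield}$ and each constraint as $a_k(x) \geq \theta_k$ with $a_k(x) = \alpha_k\ \text{revenue}+\beta_k\ \text{welfare}+\gamma_k\ \text{click yield}$, assembled into $A(x)+z=\theta$ with $z$ ranging over the non-positive cone $Z$. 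All of the functionals are linear in $x$ and $\mathcal{X}$ is convex, so we are exactly in the setting of $P_Z(\theta)$.

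Assuming the Slater-type condition from the corollary (existence of some $\tilde{x}\in\mathcal{X}$ with $A(\tilde{x}) > \theta$), strong duality yields multipliers $\lambda^* = (\lambda_1^*,\ldots,\lambda_r^*) \geq 0$ with
\[
x^* \in \arg\max_{x\in\mathcal{X}} L(x,\lambda^*) = \arg\max_{x\in\mathcal{X}} \Bigl(G_0(x) + \sum_{k=1}^{r}\lambda_k^*\bigl(a_k(x)-\theta_k\bigr)\Bigr).
\]
The additive constant $-\sum_k\lambda_k^*\theta_k$ is irrelevant for the argmax, so $x^*$ also maximizes $G_0(x)+\sum_k\lambda_k^* a_k(x)$. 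Collecting coefficients term by term, this combined functional is precisely $\alpha^*\ \text{revenue}+\beta^*\ \text{welfare}+\gamma^*\ \text{click yield}$ where
\[
\alpha^* = \alpha_0 + \sum_{k}\lambda_k^*\alpha_k,\qquad \beta^* = \beta_0 + \sum_{k}\lambda_k^*\beta_k,\qquad \gamma^* = \gamma_0 + \sum_{k}\lambda_k^*\gamma_k.
\]
Because the original coefficients and the Lagrange multipliers are all non-negative, the new coefficients are non-negative, so the resulting unconstrained problem is of the form assumed for $OBJ$ in Theorem~\ref{thm:opt}, confirming the theorem.

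The one real obstacle is ensuring that the constraint qualification in the corollary actually applies, so that strong duality is available; I would either add this as an implicit hypothesis or appeal to the relative-interior version of Theorem~\ref{thm:strongL} directly. Everything else is bookkeeping. The analogous statement for equality constraints goes through identically using the second half of the corollary; the reason the theorem is phrased for inequalities is precisely that non-negativity of $\lambda^*$ is needed to guarantee non-negativity of $\alpha^*, \beta^*, \gamma^*$, keeping the reduced problem inside the class of objectives permitted by $OBJ$.
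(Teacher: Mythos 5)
Your proposal is correct and follows essentially the same route as the paper: invoke strong duality (justified by the Slater-type corollary) to obtain multipliers $\lambda^* \geq 0$, observe that $x^*$ maximizes the Lagrangian, and read off $\alpha^* = \alpha_0 + \sum_k \lambda_k^* \alpha_k$ and likewise for $\beta^*, \gamma^*$. The constraint-qualification caveat you flag is handled in the paper by splitting into the case where a strictly feasible point exists (so the problem is strong Lagrangian) and the degenerate case where feasibility holds only on the boundary, which the paper dispatches by noting that the solution is then essentially unique and determines the coefficients.
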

\begin{proof}
Since a solution $x^*$ exists, the feasible region is non-empty.   If a feasible solution exists with all the inequalities strict, then we know the problem is strong Lagrangian, and hence there are Lagrange multipliers $\lambda^* = [\lambda^*_1, ...,\lambda^*_r]$ such that the Lagrangian $L(x^*,\lambda^*)$  solves the problem --- and the latter can be written as  OBJ with $ \alpha^*=\alpha_0 + \sum_i \alpha_i  \lambda^*_i$, $ \beta^*=\beta_0 + \sum_i \beta_i  \lambda^*_i$, $ \gamma^*=\gamma_0 + \sum_i \gamma_i  \lambda^*_i$.  Note that complementary slackness ensures that $\lambda.(\theta - A(x^*)) = 0$.  If no feasible solution exists except on the boundary of the constraint set,  then essentially there is one solution, whose values determine $\alpha^*, \beta^*, \gamma^*$.
\end{proof}

Each of the stakeholders would like to dominate the objective function (by setting their term in OBJ to 1 and the other terms to zero). We call an auction approximately optimal with level $\delta$  for one of the stakeholders if it achieves at least a $\delta$ fraction of the utility the optimal auction achieves for that stakeholder alone. One consequence of the theorem above is that the Pareto optimal surface is concave, so that there always exists a deterministic auction that achieves any convex combination of optimality levels for all the stakeholders.
\begin{theorem}
The Pareto surface is concave. That is, the surface of points $($revenue, welfare, click yield$)$ that are feasible by some direct revelation auction such that there is no other auction that simultaneously achieves weakly better revenue, welfare and click yield, and also achieves strictly better of one of the tree, is concave.
\end{theorem}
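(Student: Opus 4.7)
My strategy is to prove the theorem by showing that the set $S \subseteq \mathbb{R}^3$ of achievable triples $(\text{revenue}, \text{welfare}, \text{click yield})$ over all direct revelation auctions is a convex subset of $\mathbb{R}^3$; concavity of the Pareto surface is then the standard statement that the upper-right frontier of a convex set is concave. The two ingredients I will use are (i) the linearity of all three objectives in the allocation rule $x$, which is already exploited in the proof of Theorem~\ref{thm:opt}, and (ii) the assumed convexity of the feasible region $\mathcal{X}$ for allocations.

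First, given any two feasible direct revelation mechanisms with allocation rules $x^{(1)}, x^{(2)} \in \mathcal{X}$ and their corresponding Myerson payment rules $p^{(1)}, p^{(2)}$, consider the mixture $x^{(\lambda)} = \lambda x^{(1)} + (1-\lambda) x^{(2)}$ and $p^{(\lambda)} = \lambda p^{(1)} + (1-\lambda) p^{(2)}$ for $\lambda \in [0,1]$. The allocation $x^{(\lambda)}$ lies in $\mathcal{X}$ because $\mathcal{X}$ is convex, monotonicity is preserved under convex combinations, and the payment rule remains individually rational and incentive compatible because IC and IR are linear (pointwise) constraints on $(x,p)$. Since revenue, welfare, and click yield are each linear functionals of $(x,p)$, the triple achieved by the mixed mechanism is exactly $\lambda (r_1,w_1,c_1) + (1-\lambda)(r_2,w_2,c_2)$. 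Therefore $S$ is convex.

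Second, I will deduce concavity of the Pareto surface from convexity of $S$. Let $(r^*, w^*, c^*)$ be any Pareto-optimal point of $S$. Because $S$ is convex and $(r^*, w^*, c^*)$ cannot be strictly dominated componentwise, a supporting hyperplane argument (which is precisely the content of Theorem~\ref{thm:constraint} applied with the Pareto-optimality constraints) produces nonnegative weights $\alpha^*, \beta^*, \gamma^* \geq 0$, not all zero, such that $(r^*, w^*, c^*)$ maximizes $\alpha^* r + \beta^* w + \gamma^* c$ over $S$. Thus every point of the Pareto surface lies on a supporting hyperplane of $S$ with nonnegative normal, and the surface is the upper envelope of such hyperplanes; this is exactly the definition of a concave surface of achievable utility profiles. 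The remark preceding the theorem about the existence of a deterministic auction then follows by applying Theorem~\ref{thm:opt} to the recovered weights $(\alpha^*, \beta^*, \gamma^*)$: ranking by $w_i \psi_i(t_i)$ yields a deterministic allocation (ties occur on a measure-zero set) that realizes $(r^*, w^*, c^*)$.

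The main obstacle I anticipate is the verification that mixing two auctions yields a mechanism that genuinely belongs to the class of auctions the theorem quantifies over, i.e., that the mixed rule $(x^{(\lambda)}, p^{(\lambda)})$ is still a valid direct revelation mechanism respecting all of the structural assumptions (monotonicity of the allocation, membership in $\mathcal{X}$, IC, IR). Each of these is a convex, or even linear, condition, so in principle each check is routine, but this is the step that requires explicit care; everything after the convexity of $S$ is established is a direct application of standard convex-analytic facts together with Theorem~\ref{thm:constraint}.
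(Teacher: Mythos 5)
Your proof is correct, and its skeleton is the same as the paper's: concavity of the Pareto surface is extracted from convexity of the achievable region together with supporting hyperplanes with nonnegative normals, whose (pointwise) envelope is the surface. The difference is where the convexity comes from. The paper's proof is a three-line appeal to the Lagrangian machinery of Section~\ref{sec-constrained}: concavity of the value function $\Gamma(\theta)$ (proved in Appendix~\ref{sec:constraint_proof} by mixing \emph{allocations} $\tilde{x}=\lambda x^1+(1-\lambda)x^2$ and using linearity of $G$ and $A$) yields a non-vertical supporting hyperplane to the hypograph at each $\theta$, and the surface is the pointwise infimum of these affine functions. You instead establish convexity of the set $S$ of achievable triples directly by mixing entire mechanisms $(x,p)$ and checking that monotonicity, IC, and IR are preserved as linear/convex constraints --- a step the paper leaves implicit inside its definition of $\mathcal{X}$ and its reliance on Myerson payments. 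Your route is more self-contained on the mechanism-design side; the paper's route gets the non-verticality of the hyperplane (hence the bound ``from above'' on the constrained objective) for free from Theorem~\ref{thm:strongL}. Two small blemishes, neither fatal: the surface is the \emph{lower} envelope (pointwise infimum) of the supporting hyperplanes, not the ``upper envelope'' --- the pointwise supremum of affine functions would be convex, not concave; and your closing claim that ranking by $w_i\psi_i(t_i)$ realizes the exact point $(r^*,w^*,c^*)$ is slightly too strong when the supporting hyperplane touches $S$ along a face (the deterministic optimizer is then only guaranteed to land somewhere on that face), though this concerns the corollary rather than the theorem itself.
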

\begin{proof}
Fix some constraint $\theta$.  Then we know that for every $OBJ(\alpha, \beta, \gamma)$ there is a non-vertical supporting hyperplane (the Lagrangian) at the point $\theta$ to the hypograph $\Gamma(\theta)$ which bounds the (constrained) objective function from above. Taking the point-wise infimum of these affine hyperplanes as $\alpha$, $\beta$, and $\gamma$ vary results in a concave surface. 
\end{proof}

\begin{corollary}
Let $\alpha, \beta, \gamma \geq 0$ satisfy $\alpha + \beta + \gamma = 1$. Then there exists a deterministic auction that achieves at least an $\alpha$ fraction of the max revenue, a $\beta$ fraction of the max welfare, and a $\gamma$ fraction of the max click yield.
\end{corollary}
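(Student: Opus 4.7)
The plan is to realise the desired deterministic auction as the solution of a constrained version of the optimisation studied in Section~\ref{sec-constrained}. Let $R^*$, $W^*$, and $C^*$ denote the maximum achievable revenue, welfare, and click yield respectively; each is attained by a truthful auction given by Theorem~\ref{thm:opt} with the appropriate singleton weights. First I would establish feasibility of the problem ``maximise revenue subject to $\text{welfare} \geq \beta W^*$ and $\text{click yield} \geq \gamma C^*$'' by exhibiting the randomised truthful mixture $M$ that runs the revenue-, welfare-, and click-yield-optimal auctions with probabilities $\alpha$, $\beta$, $\gamma$ respectively. Since each component mechanism is truthful and yields non-negative revenue, welfare, and click yield, $M$ attains revenue at least $\alpha R^*$, welfare at least $\beta W^*$, and click yield at least $\gamma C^*$; hence the feasible set is non-empty and the constrained optimum has revenue at least $\alpha R^*$.

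Next I would apply Theorem~\ref{thm:constraint} to this constrained problem: there exist non-negative multipliers $\alpha^*, \beta^*, \gamma^*$ such that any optimal $x^*$ also maximises $\alpha^*\,\text{revenue} + \beta^*\,\text{welfare} + \gamma^*\,\text{click yield}$. Theorem~\ref{thm:opt} then pins down the maximiser as the pointwise rank-score allocation that for each $t$ chooses the $x(t)$ maximising $\sum_{i} w_i \psi_i^*(t_i) x_i(t)$, where $\psi_i^*(t_i) = \alpha^* \phi_i(t_i) + \beta^* t_i + \gamma^*$. Because the maximiser of a linear functional over the slot-assignment polytope is attained at an extreme point, this allocation is deterministic (ties can be broken arbitrarily without affecting the objective). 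Pairing it with Myerson payments produces a deterministic truthful auction meeting $\text{welfare} \geq \beta W^*$ and $\text{click yield} \geq \gamma C^*$ by the constraints and $\text{revenue} \geq \alpha R^*$ by the lower bound on the constrained optimum established via $M$.

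The main obstacle is the passage from the randomised mixture, which trivially meets each per-objective target, to a \emph{deterministic} auction attaining the same bounds; this is exactly what the combination of Theorems~\ref{thm:constraint} and~\ref{thm:opt} delivers, since every constrained optimum coincides with some unconstrained rank-score auction. A minor subtlety is the strict-inequality hypothesis in the corollary to Theorem~\ref{thm:strongL}: if $M$ happens to saturate a constraint — for instance when the revenue- and click-yield-optimal auctions both give zero welfare — one either appeals to the degenerate branch identified inside the proof of Theorem~\ref{thm:constraint}, or perturbs the thresholds slightly downwards and passes to the limit using compactness of the feasible outcome region, with the same conclusion.
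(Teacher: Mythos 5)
Your argument is correct and rests on the same strong-Lagrangian machinery as the paper, but it enters through a different door: the paper derives this corollary from the immediately preceding theorem that the Pareto surface is concave (the convex combination $\alpha P_R+\beta P_W+\gamma P_C$ of the three single-objective optima is feasible, hence weakly dominated by a Pareto point, which---being the maximiser of a supporting linear functional---is realised by a deterministic rank-score auction), whereas you formulate an explicit constrained programme and invoke Theorem~\ref{thm:constraint} directly. Your route makes explicit two things the paper leaves unstated: the feasibility witness (the $(\alpha,\beta,\gamma)$-mixture $M$, whose componentwise bounds follow from non-negativity of each objective) and the treatment of the Slater condition when $M$ saturates a constraint. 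The one step you state too quickly is the passage from the possibly randomised constrained optimum $x^*$ to a deterministic auction: breaking ties in the pointwise rank-score maximisation ``arbitrarily'' preserves the \emph{combined} objective $\alpha^*\,\text{revenue}+\beta^*\,\text{welfare}+\gamma^*\,\text{click yield}$ but not necessarily the individual components, so an unlucky selection could in principle violate $\text{welfare}\geq\beta W^*$ or $\text{click yield}\geq\gamma C^*$; you need either that ties among the scores $w_i\psi_i^*(t_i)$ occur with probability zero (which holds for the continuous type distributions assumed here, outside degenerate cases) or a selection of an extreme point of the optimal face that respects the two linear constraints. Since the paper's implicit argument has exactly the same soft spot, this is a matter of polish rather than a gap, and your proof otherwise correctly verifies the hypotheses it needs (in particular $\alpha^*=1>0$, so Theorem~\ref{thm:opt} applies).
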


Another class of constrained auctions of interest are those where we limit the number of ads shown. That is, given $J$ search queries over  various search terms we may wish to show no more than $k$ ads  in expectation.
This is another natural way to capture the utility of searchers, who presumably would generally prefer to see fewer, better-targeted ads.
We now investigate such ad-limited auctions, making the simplifying assumption that all $J$ auctions are single item auctions.
In contrast to previous work which has suggested that a per-impression reserve is a poor tool for increasing revenue or trading-off between revenue and other objectives~\cite{roberts13,thompson13}, this analysis shows that the use of such a reserve is the \emph{optimal} way of controlling the number of ads shown.

\begin{theorem}
\label{thm:per-impression}
Suppose we have $J$ single slot search terms where search term $j$ appears with probability $q_j$ and has a slot effect of $s_j$ for its lone slot. Moreover, suppose we wish to show no more than $\theta \leq 1$ ads per search (in expectation) and when auctioning off search term $j$, we wish to maximize $\alpha_j\ \text{revenue} + \beta_j\ \text{welfare} + \gamma_j\ \text{click yield}$. Then the optimal solution introduces a per-impression reserve price in all $J$ auctions. Specifically, for each search term $j$ we will separately wish to solve the following:
\begin{align*}
\max_{x_j} \int_{T_j} \left(\sum_{i = 1}^{n_j} \left(w_{j, i}\psi_{j, i}(t_{j, i}) - \lambda/s_j\right) x_{j, i}(t_j)\right) f_j(t_j) dt_j
\end{align*}
where quantities indexed by $j$ refer to the $j$th search term's auction (e.g. $\psi_{j, i}$ is the $\psi$ function for search term $j$'s bidder $i$), $\lambda$ is some common value to all search terms, and $\alpha_j, \beta_j, \gamma_j \geq 0$ with $\alpha_j+\beta_j+\gamma_j>0$  for each $j$.
\end{theorem}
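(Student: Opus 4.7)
The plan is to cast the full $J$-auction system as one constrained optimization over a joint allocation rule, then apply the strong Lagrangian machinery of Theorem~\ref{thm:strongL} and observe that the resulting Lagrangian is separable across search terms, with a single shared multiplier becoming the common $\lambda$ of the theorem.

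First I would express both the global objective and the ad-count constraint as linear functionals of the joint allocation $x = (x_{j,i})_{j,i}$. By Theorem~\ref{thm:opt} applied per search term and weighted by $q_j$, the global objective is equivalent to
\begin{equation*}
G(x) = \sum_{j=1}^J q_j \int_{T_j} \sum_{i=1}^{n_j} w_{j,i}\,\psi_{j,i}(t_{j,i})\, x_{j,i}(t_j)\, f_j(t_j)\, dt_j,
\end{equation*}
where each $\psi_{j,i}$ is built from its own $(\alpha_j,\beta_j,\gamma_j)$. Since term $j$ has a single slot of effect $s_j$, the probability $i$ is actually shown given $t_j$ is $x_{j,i}(t_j)/s_j$, so the expected number of ads per search is
\begin{equation*}
A(x) = \sum_{j=1}^J q_j \int_{T_j} \sum_{i=1}^{n_j} \frac{x_{j,i}(t_j)}{s_j}\, f_j(t_j)\, dt_j,
\end{equation*}
giving the single scalar inequality constraint $A(x) \leq \theta$.

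Next I would invoke strong duality. Taking the zero allocation $\tilde x\equiv 0$ gives $A(\tilde x) = 0 < \theta$, so the Slater-type hypothesis of the corollary to Theorem~\ref{thm:strongL} holds, and there exists a single multiplier $\lambda\geq 0$ such that maximizing the Lagrangian
\begin{align*}
L(x,\lambda) &= G(x) - \lambda\bigl(A(x) - \theta\bigr) \\
&= \lambda\theta + \sum_{j=1}^J q_j \int_{T_j} \sum_{i=1}^{n_j}\bigl(w_{j,i}\psi_{j,i}(t_{j,i}) - \lambda/s_j\bigr) x_{j,i}(t_j)\, f_j(t_j)\, dt_j
\end{align*}
over $x \in \mathcal{X}$, together with complementary slackness (which pins $\lambda$ down), solves the original problem. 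The key point is that the inner integrand for term $j$ depends only on $x_{j,\cdot}$, so the maximization decouples into exactly the $J$ per-term subproblems in the theorem, sharing the \emph{same} $\lambda$. Applying the pointwise rank-score logic from Theorem~\ref{thm:opt} to each decoupled subproblem then awards the single slot to the bidder maximising $w_{j,i}\psi_{j,i}(t_{j,i})$, but only when this quantity exceeds the threshold $\lambda/s_j$, which is exactly a per-impression reserve on term $j$.

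The main subtlety I anticipate is verifying the Slater/relative-interior condition cleanly: the feasible set $\mathcal{X}$ for single-slot randomized allocations sits in $L^1$ with pointwise constraints $x_{j,i}\geq 0$ and $\sum_i x_{j,i}(t_j)\leq s_j$, and I would need a strictly interior $\tilde x$ with $A(\tilde x) < \theta$ in the $L^1$ topology of Theorem~\ref{thm:strongL}. In the inequality case this is essentially immediate---shrink any feasible allocation toward zero to obtain strict slack in $A$---after which everything else is bookkeeping on the Lagrangian identity displayed above.
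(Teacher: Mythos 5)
Your proposal is correct and follows essentially the same route as the paper: recast the $J$ auctions as a single constrained optimization, attach one multiplier $\lambda$ to the shared ad-count constraint, and observe that the Lagrangian decouples into per-term subproblems with a common $\lambda$, which manifests as the per-impression reserve $\lambda/s_j$. The only minor difference is in how the multiplier is justified --- you verify a Slater-type condition and invoke strong duality via the corollary to Theorem~\ref{thm:strongL}, while the paper minimizes the dual over $\lambda$ and appeals to Lagrangian sufficiency (Lemma~\ref{lem:LagSuff}) --- but both are instances of the same appendix machinery.
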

\begin{proof}
We first note that the expression is monotone, as the sum of monotone components. Now via a similar analysis to the theorem \ref{thm:opt} it is not difficult to see that our problem is equivalent to:
\begin{align*}
\max_{x_1, x_2, ..., x_J}& \sum_{j = 1}^J q_j \int_{T_j} \sum_{i = 1}^{n_j} w_{j, i} \psi_{j, i}(t_{j, i}) x_{j, i}(t_j) f_j(t_j) dt_j\\
s.t.& \sum_{j = 1}^J q_j \int_{T_j} \sum_{i = 1}^{n_j} \frac{x_{j, i}(t_j)}{s_j} f_j(t_j) dt_j = \theta.
\end{align*}
Using a Lagrange multiplier $\lambda$  gives a dual problem for fixed $\lambda$, $L(\lambda,x^*)$
\begin{align*}
\max_{x_1, x_2, ..., x_J} \lambda\theta + \sum_{j = 1}^J q_j \int_{T_j} \left(\sum_{i = 1}^{n_j} \left(w_{j, i}\psi_{j, i}(t_{j, i}) - \lambda/s_j\right) x_{j, i}(t_j)\right) f_j(t_j) dt_j.
\end{align*}
For a fixed value of $\lambda$ not only does the $\lambda\theta$ in the above maximization become irrelevant, but also the terms in the sum indexed by $j$ are independent of each other. Thus, the maximum of the sum can be separated into the sum of the maximums. The resulting solution (for fixed $\lambda$) is feasible;   taking the minimum over $\lambda$  of $L(\lambda,x^*)$  also produces a solution which is feasible, and hence by Lagrangian sufficiency $L(\lambda^*, x^*)$ solves the constrained problem.  (c.f. Lemma \ref{lem:LagSuff}).
\end{proof}

\section{Experiments}

Roberts et al.~\citeyear{roberts13} empirically considered a number of ways of making trade-offs between revenue and clicks / welfare and found that a form of per-click reserve they proposed (independently proposed by Thompson and Leyton-Brown~\citeyear{thompson13}) lead to the best results.  In Theorem~\ref{thm:opt}, we derived the optimal way of making such trade-offs, by optimizing a weighted combination of virtual valuations, clicks, and welfare. The proposed form can be thought of as a linear approximation to this, providing a theoretical explanation for this observation.  Another observation from this prior work is that, despite their popularity, per-impression reserve prices tended to enable poor tradeoffs.  However, Theorem~\ref{thm:per-impression} suggests they are the right tool for controlling the number of ads shown, as they correspond to the shadow price of a constraint on the number of ads.  In this section, we revisit the experiments from \cite{roberts13} to validate this empirically.

Our experiments look at four metrics: the number of ads shown (impression yield), the revenue, the (advertiser) welfare, and the number of clicks generated (click yield).  In the short term, the auctioneer cares mostly about the revenue metric. Welfare reflects the total value created, and is a focus of the auctioneer in the long term, as it determines the ability of the platform to attract participants (who may have alternative platforms available). Click yield can be thought as a proxy for the value created for the users who are clicking on (presumably) useful ads. Impression yield can be thought of as another proxy for the value created for users, who presumably do not wish to see useless ads.

We consider several ranking algorithms in our analysis: the method Roberts et al.~\citeyear{roberts13} and Thompson and Leyton-Brown~\citeyear{thompson13} used to approximate the revenue optimal auction (denoted by $(b_i - r) w_i$, where $r$ is a per-click reserve), the standard ranking algorithm coupled with a per-click reserve $r$ (denoted by $b_iw_i / r$) and with a per-impression reserve $\rho$ (denoted by $b_iw_i / \rho$), and a two parameter ranking algorithm that combines the first and the third (denoted by $(b_i-r)w_i / \rho$). Since this ranking algorithm is characterized by two parameters, the operating points form a region represented in the following figures by a shaded area.

We begin with a simple example characterized by eight advertisers bidding for three slots. Advertisers have i.i.d. types $(t_i, w_i)$ where $t_i$ and $w_i$ are independent and uniformly distributed on $[0, 1]$. Figure~\ref{fig:uniform} shows how each of the metrics changes with the allowed number of ads shown, using different auction designs. 

%

\begin{figure}[ht]
\hspace{-0.6cm}
\begin{minipage}[b]{0.3\linewidth}
\centering
\includegraphics[scale=.355]{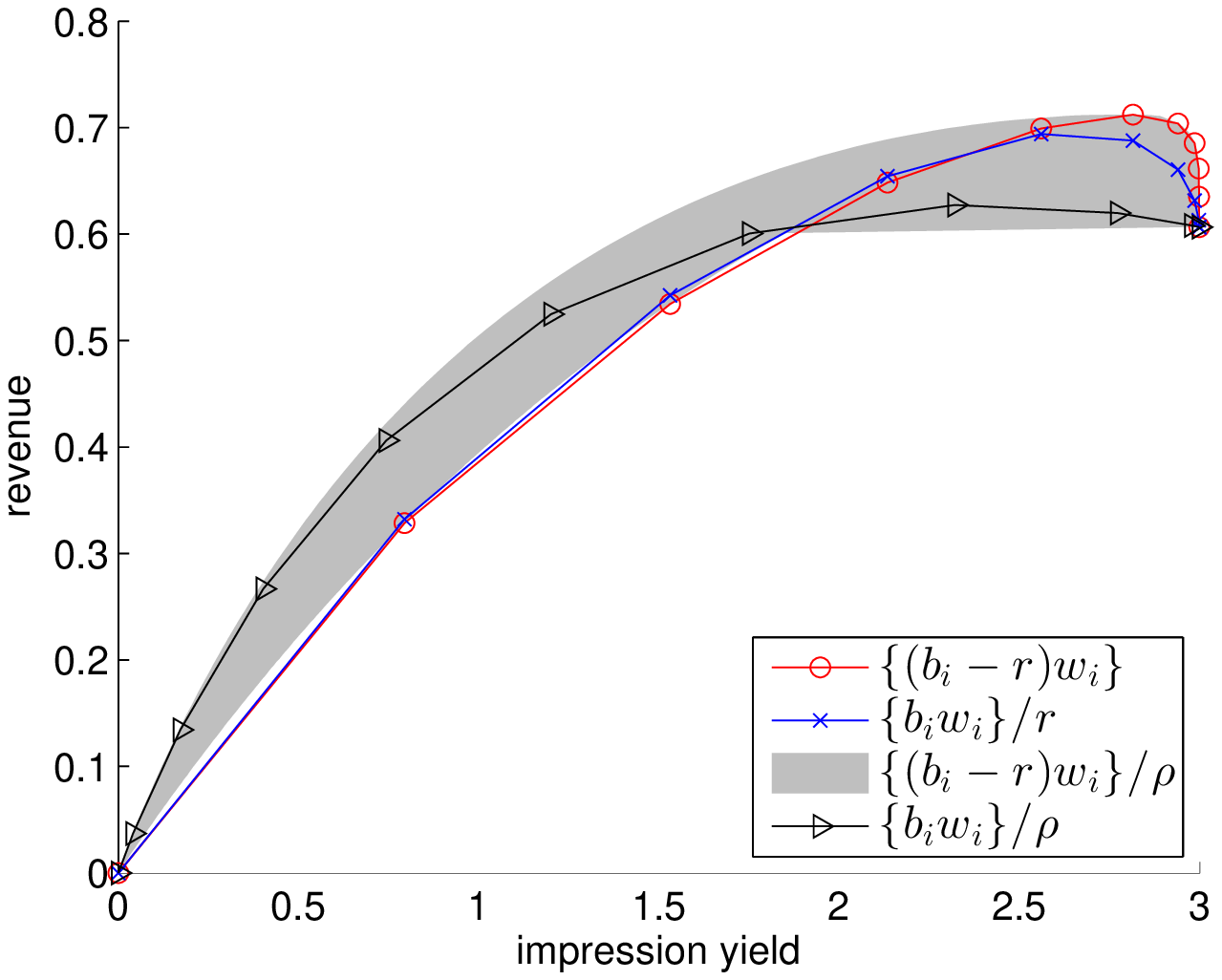}
\scriptsize{(a)}
\label{fig:revenue-uniform}
\end{minipage}
\hspace{0.5cm}
\begin{minipage}[b]{0.3\linewidth}
\centering
\includegraphics[scale=.355]{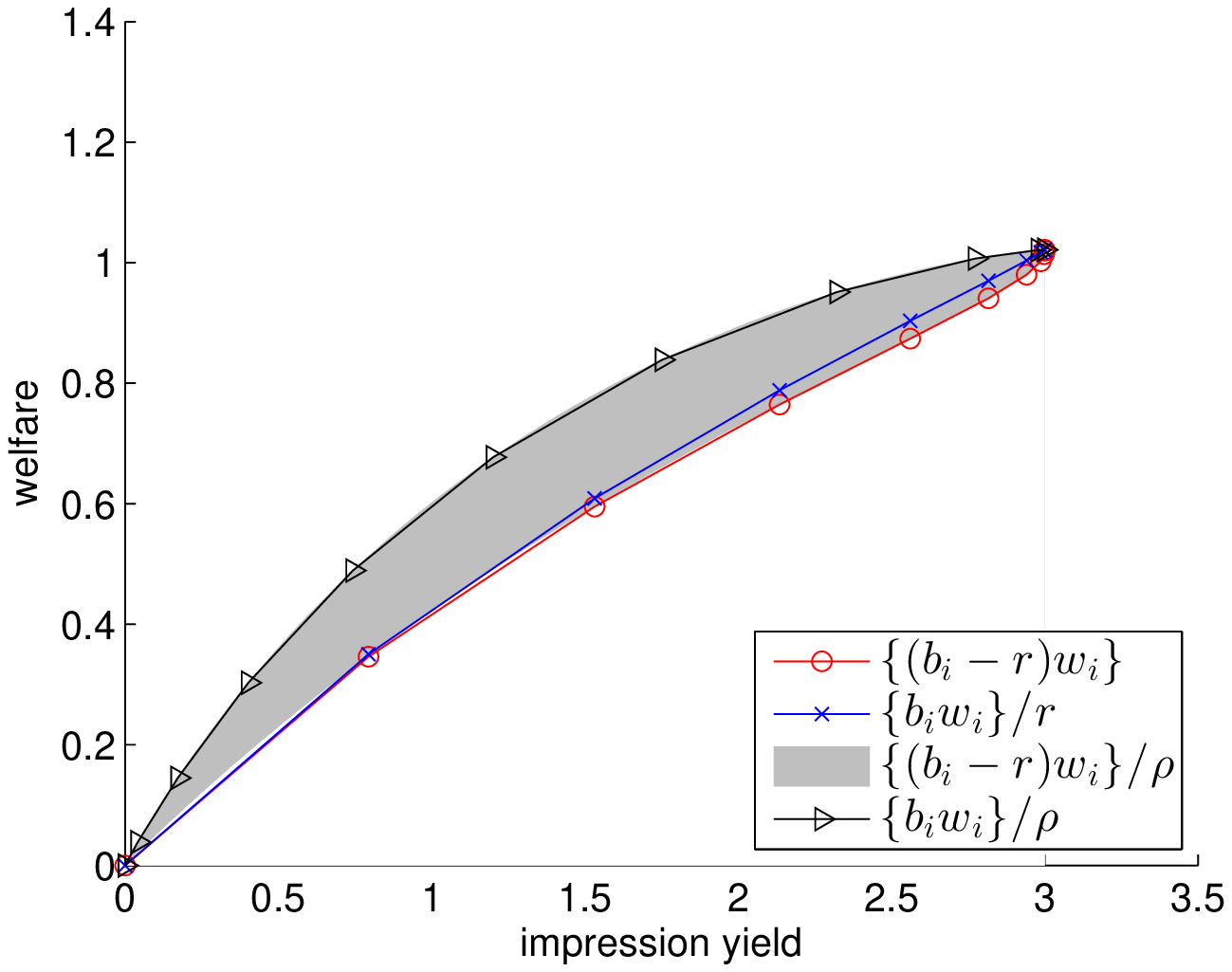}
\scriptsize{(b)}
\label{fig:social-uniform}
\end{minipage}
\hspace{0.5cm}
\begin{minipage}[b]{0.3\linewidth}
\centering
\includegraphics[scale=.355]{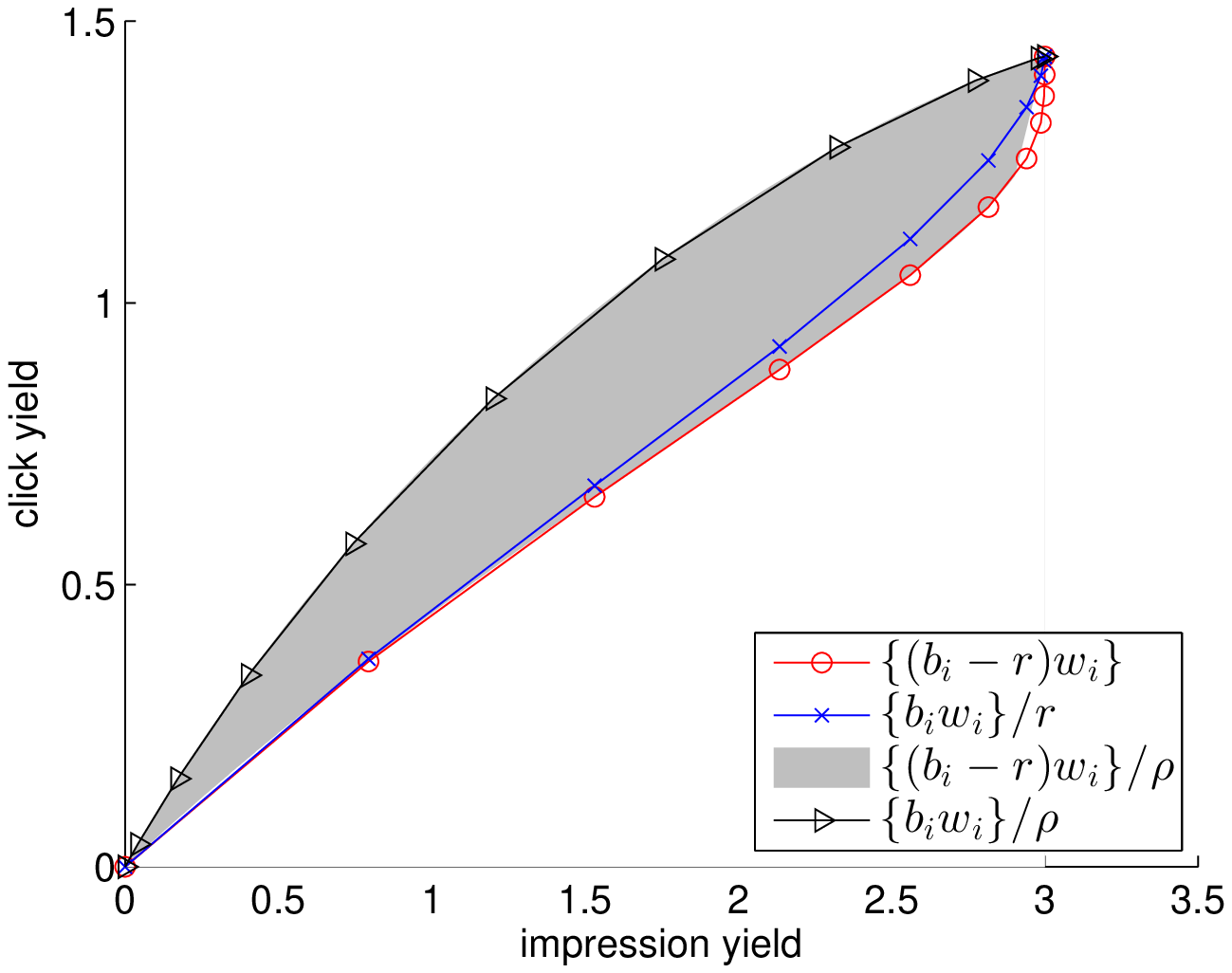}
\scriptsize{(c)}
\label{fig:click-uniform}
\end{minipage}
\caption{Feasible impression yield -- revenue (a), impression yield -- welfare (c), and impression yield -- click yield (c) operating points in a simple setting. \label{fig:uniform}}
\end{figure}


The figure illustrates Theorem~\ref{thm:per-impression}: given an expected number of ads shown, the optimal solution is provided by a ranking algorithm with a reserve score. In particular, the welfare and the click yield are always maximized by the standard ranking algorithm with a reserve score.  The revenue is always maximized by the two-parameter algorithm.

We also evaluate the ranking algorithms in a more realistic setting considered by Lahaie and Pennock~\citeyear{lahaie07}), which they selected by fitting Yahoo! data from a particular query. In this setting, bidder valuations are correlated with relevance, and have a lognormal distribution (which does not yield regular virtual valuations). The results for this setting are shown in
Figure~\ref{fig:log}, which indicates that the observations made for the simpler setting hold in this case as well.

\begin{figure}[ht]
\hspace{-0.6cm}
\begin{minipage}[b]{0.3\linewidth}
\centering
\includegraphics[scale=.355]{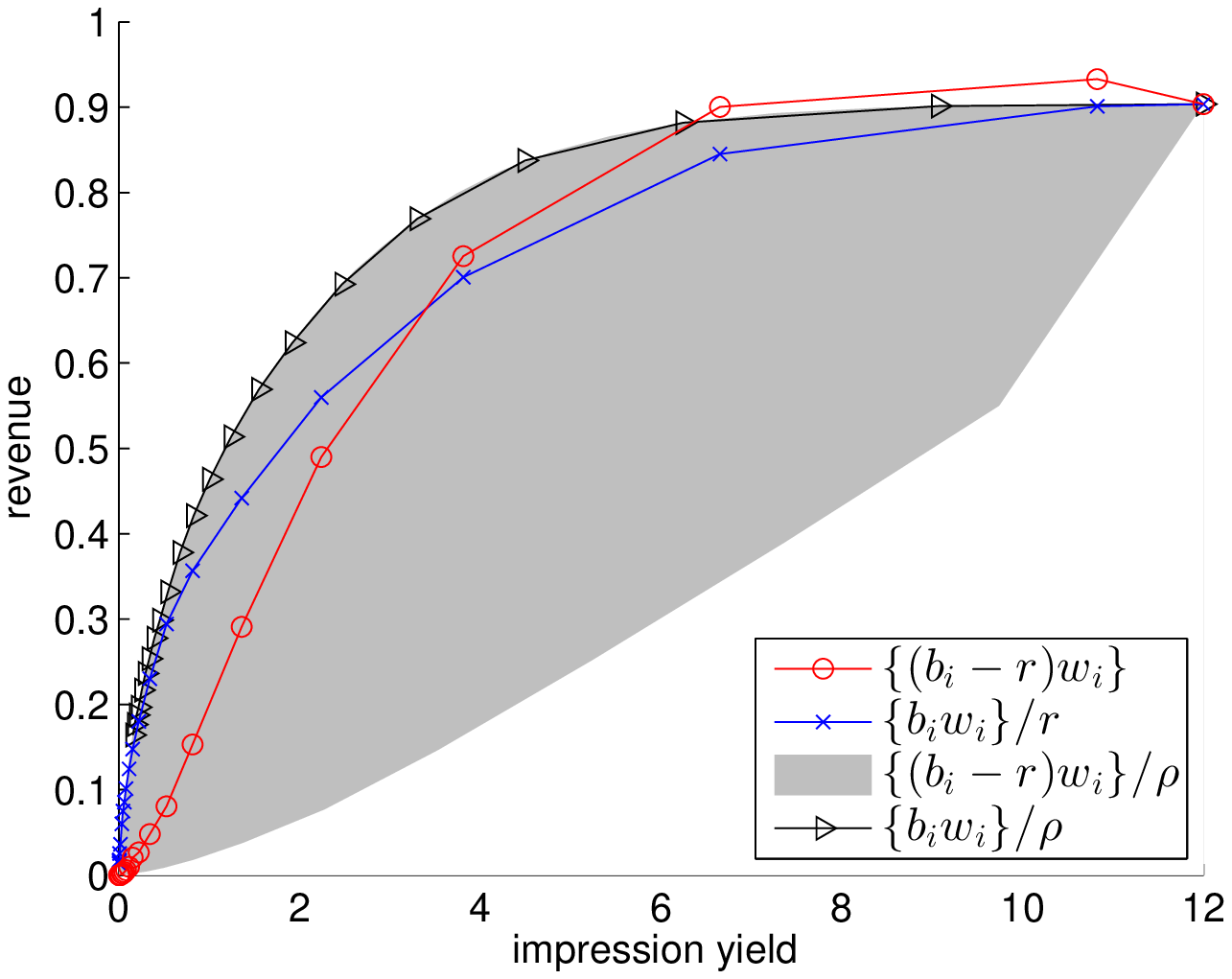}
\scriptsize{(a)}
\label{fig:revenue-log}
\end{minipage}
\hspace{0.5cm}
\begin{minipage}[b]{0.3\linewidth}
\centering
\includegraphics[scale=.355]{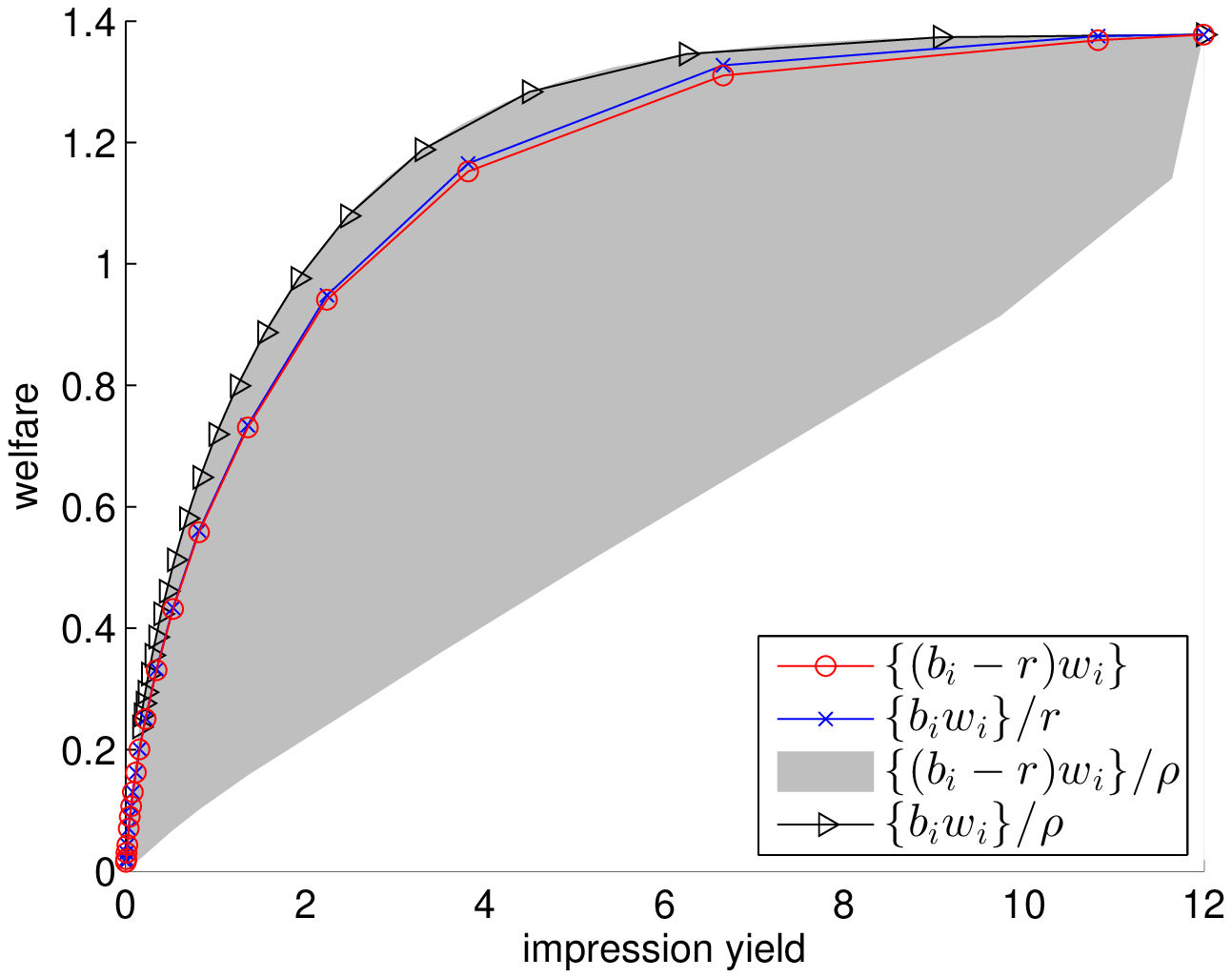}
\scriptsize{(b)}
\label{fig:social-log}
\end{minipage}
\hspace{0.5cm}
\begin{minipage}[b]{0.3\linewidth}
\centering
\includegraphics[scale=.355]{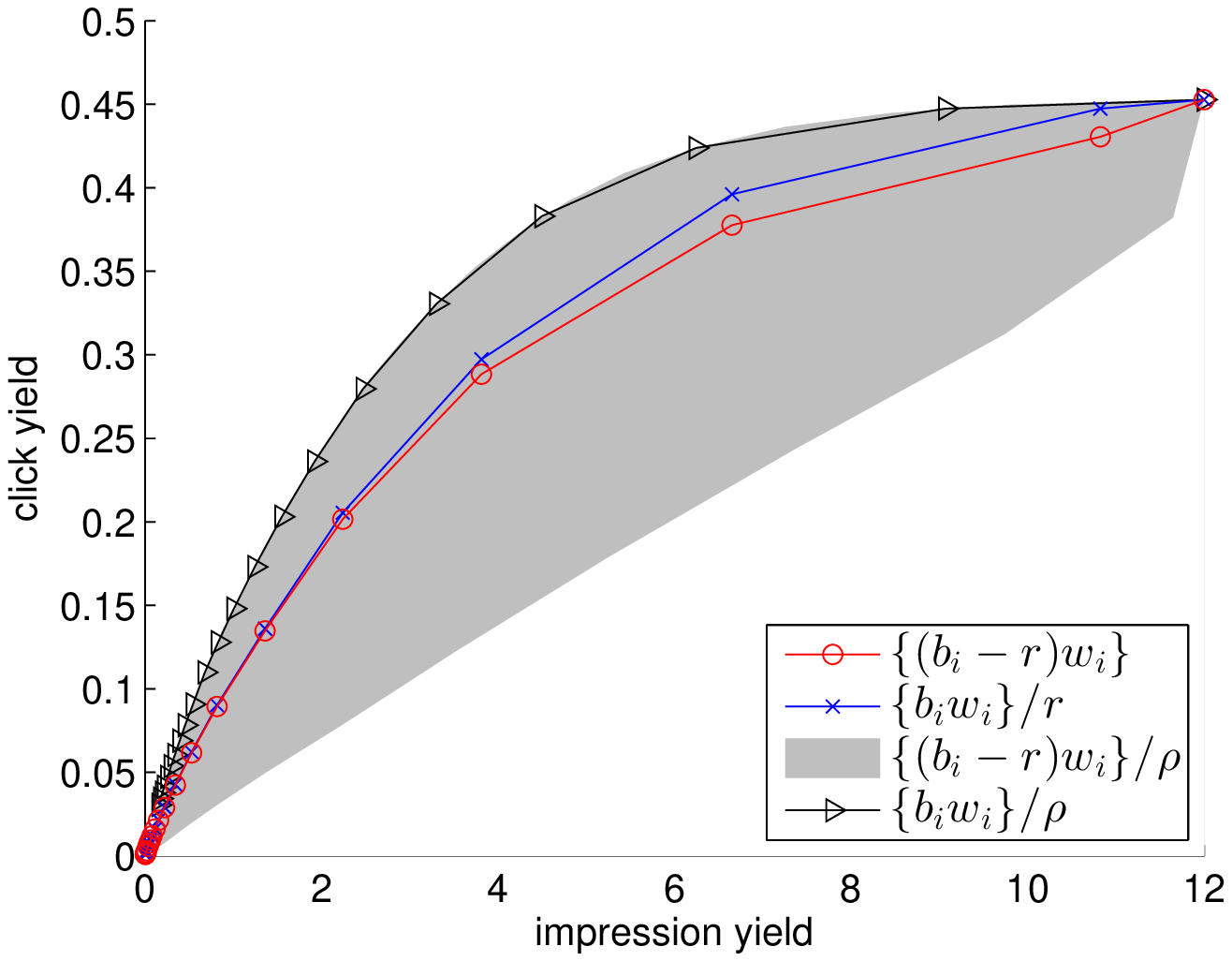}
\scriptsize{(c)}
\label{fig:click-log}
\end{minipage}
\caption{Feasible impression yield -- revenue (a), impression yield -- welfare (c), and impression yield -- click yield (c) operating points  in Lahaie and Pennock's setting.
\label{fig:log}}
\end{figure}

The above results are based on the assumption that bidders are in equilibrium. In reality, if parameters are changed, advertisers may take some time to reach equilibrium.
Therefore, we want to investigate what happens in the short term, when the ranking method is changed but advertisers do not react.

As we are not interested in the equilibrium in this case, we can simply examine the performance of different ranking algorithms on historical data. This dataset has many realistic features which were not captured by the previously analysis, such as changing bidders, matching of bids to multiple queries, and stochastic quality scores.

The data we used for our simulations (used also in~\cite{roberts13}) is historical data from Microsoft Bing for a keyword with over 500 bidders, which we selected as representative of a ''thick'' market (Figure~\ref{fig:thick-market}), and for a keyword with fewer than 10 bidders, as representative of a ''thin'' market (Figure~\ref{fig:thin-market}). The data was normalized, but the exact values are not relevant for our purpose. Both figures show the effect of the number of ads displayed on the revenue when changing from the standard ranking algorithm with a reserve price to the ranking algorithm proposed by Roberts et al.~\citeyear{roberts13} and the standard ranking algorithm with a reserve score.

In the thick market, no ranking algorithm always outperforms all the others. However, though the three algorithms show a similar trend, the standard algorithm with reserve score provides a higher revenue when only few ads are displayed. In the thin market, the standard ranking algorithm with reserve score exhibits a different trend than the other algorithms (which are similar to each other in their behavior). In particular, the revenue provided by the former algorithm remains constant for almost all the number of ads shown. The non-smoothed trend all the ranking algorithms show is explained by the advertisers' bids. In the dataset we use, some bids are more frequent than others (see~\cite{roberts13}) and this creates thresholds for the reserve price/score: when the reserve increases and moves from below a threshold to above it, the impression yield suddenly reduces. Interestingly, in both markets  the standard ranking algorithm with reserve score performs better than the others for the same range of ads shown. 

\begin{figure}[!htb]
\begin{minipage}[b]{0.45\linewidth}
\centering
\includegraphics[scale=.46]{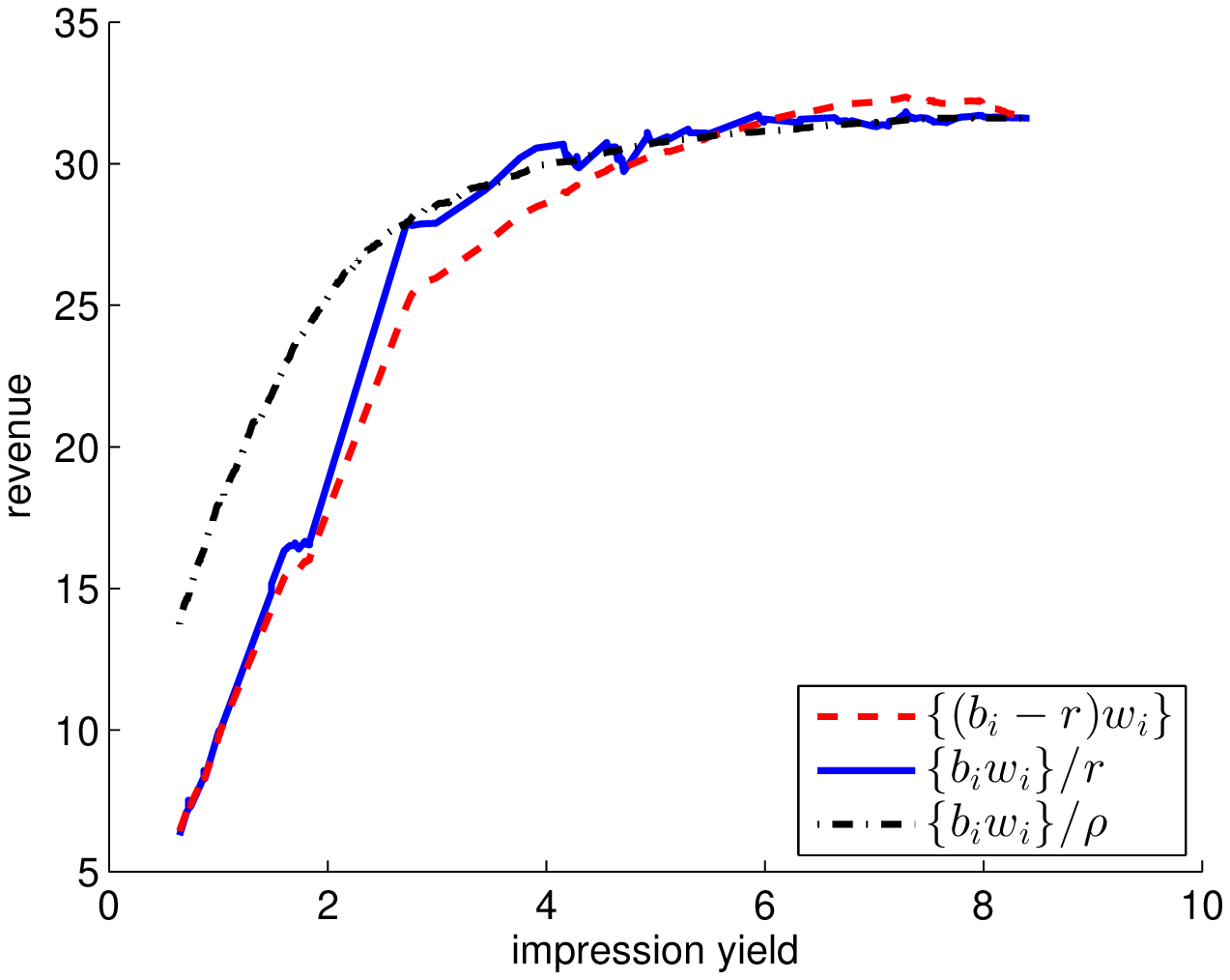}
\caption{Revenue comparison in thick market auction replays.}
\label{fig:thick-market}
\end{minipage}
\hspace{0.5cm}
\begin{minipage}[b]{0.45\linewidth}
\centering
\includegraphics[scale=.46]{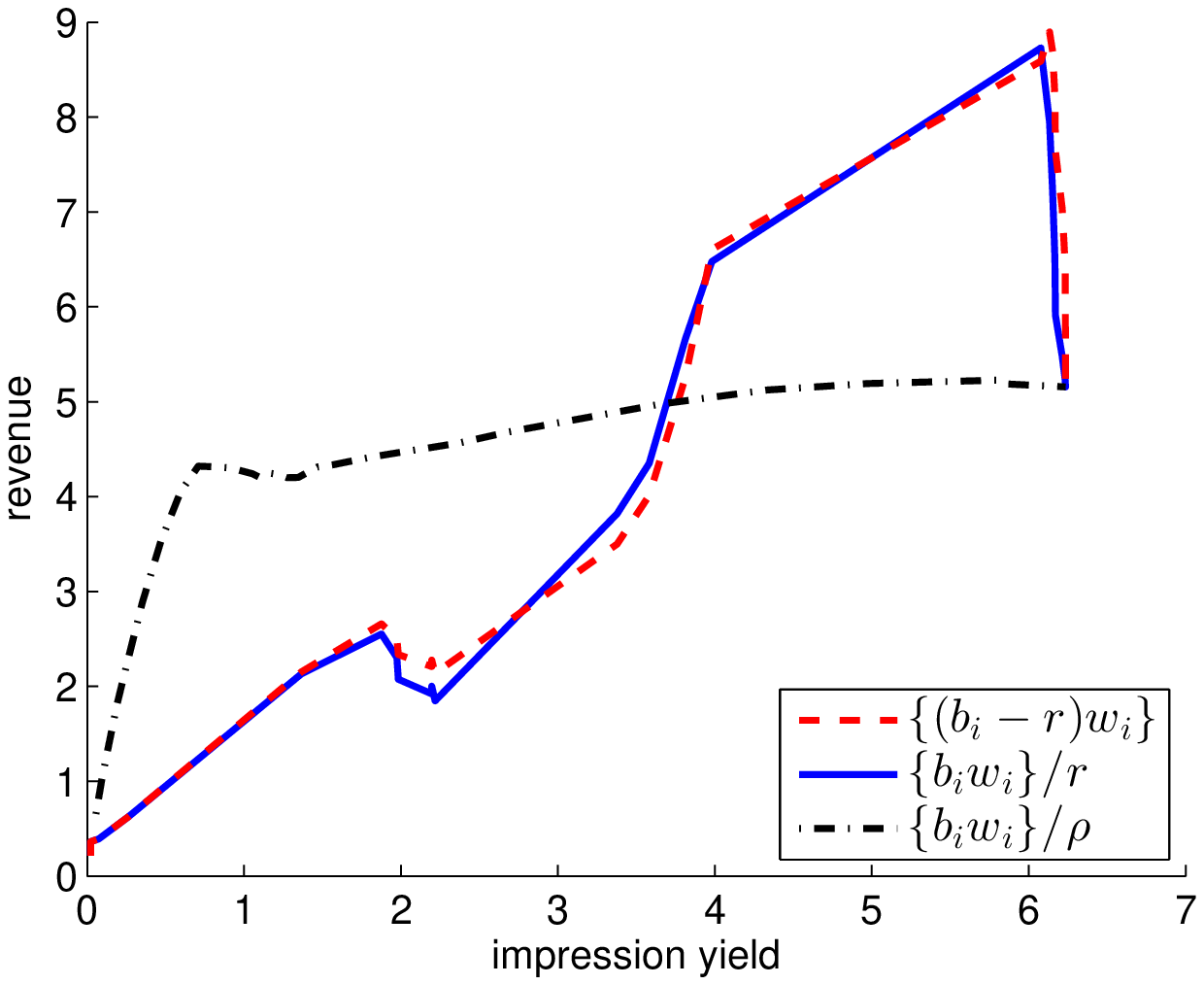}
\caption{Revenue comparison in thin market auction replays.}
\label{fig:thin-market}
\end{minipage}
\end{figure}

\begin{figure}[!htb]
\begin{minipage}[b]{0.45\linewidth}
\centering
\includegraphics[scale=.46]{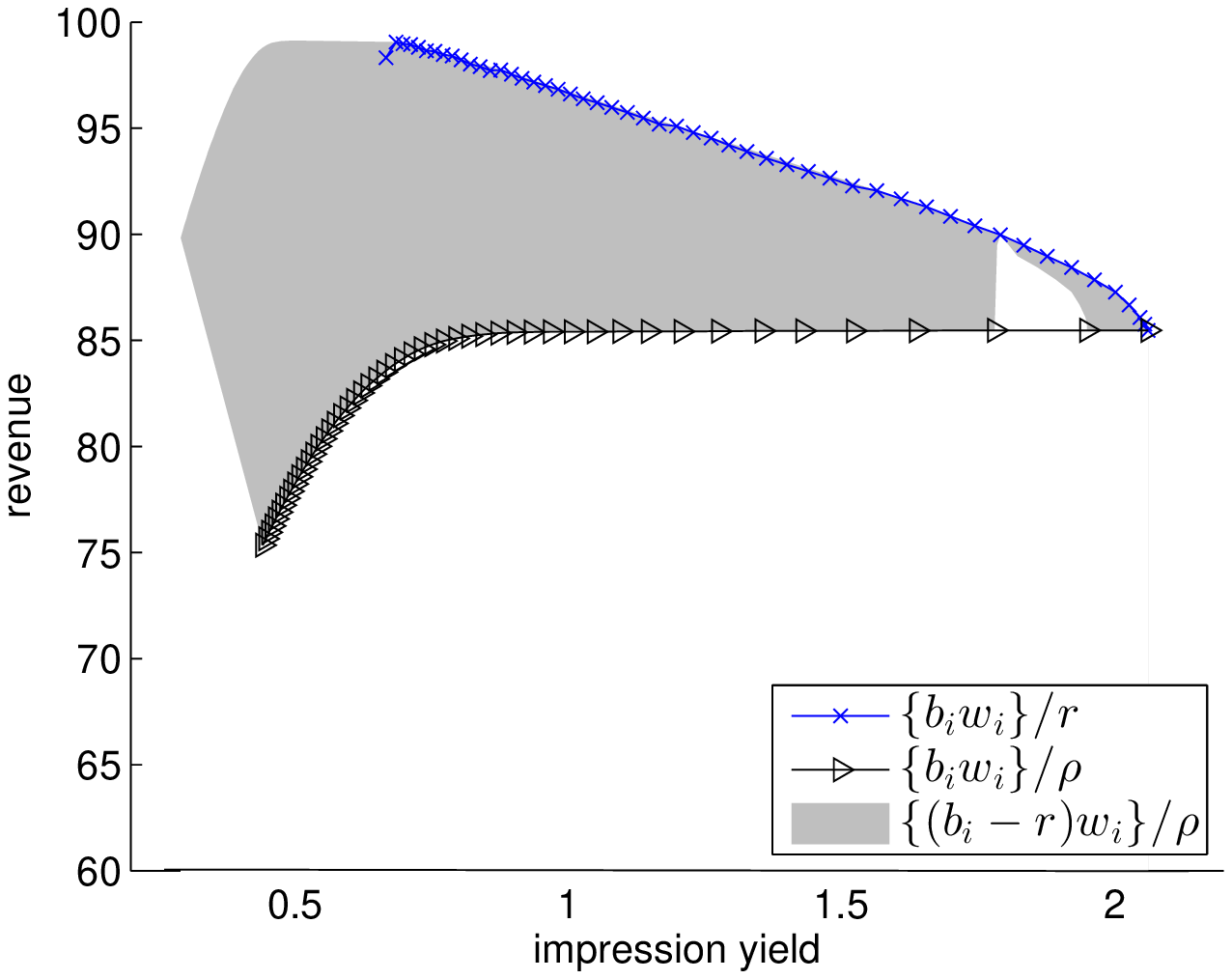}
\caption{Revenue trade-offs on real data}
\label{fig:rev-real}
\end{minipage}
\hspace{0.5cm}
\begin{minipage}[b]{0.45\linewidth}
\centering
\includegraphics[scale=.46]{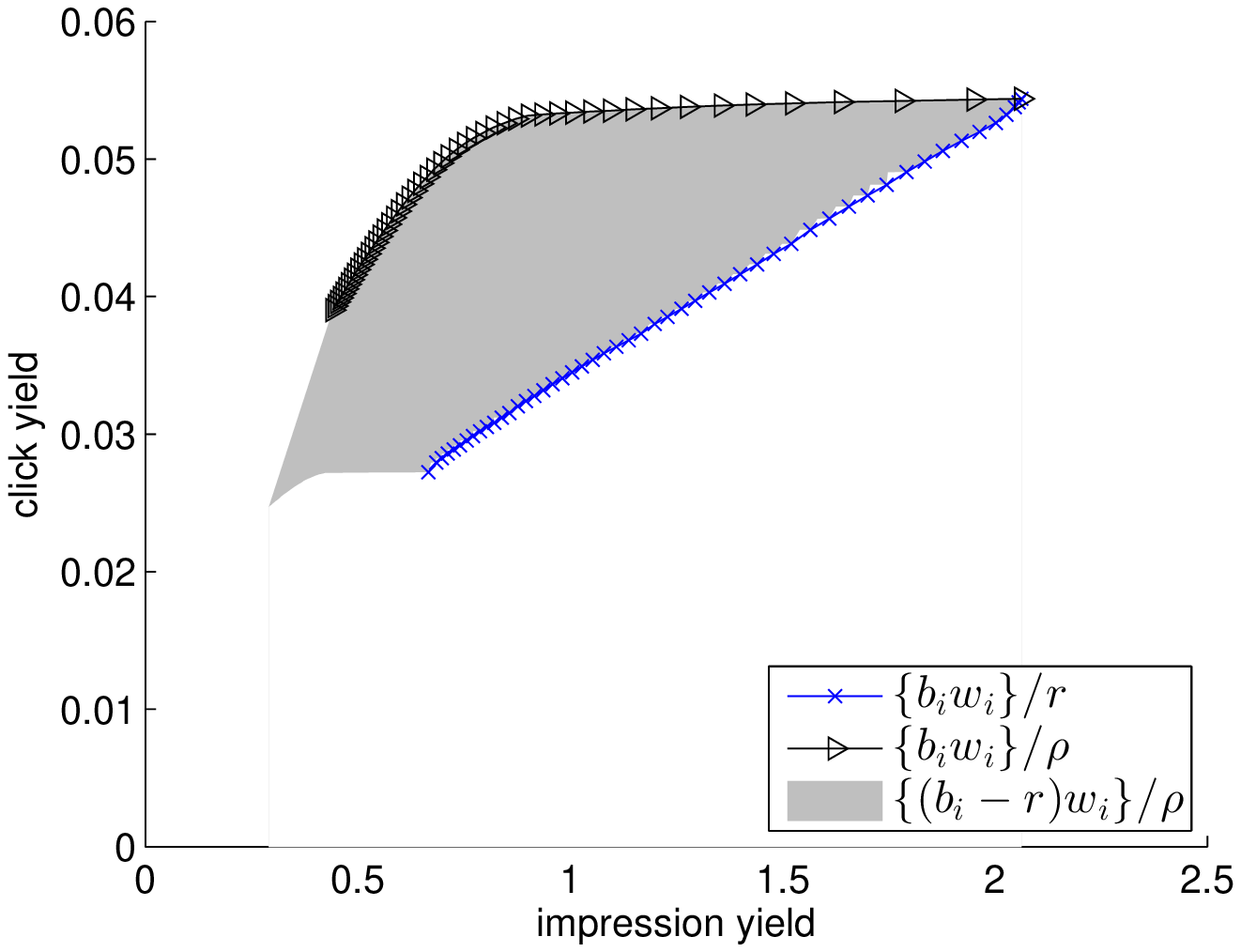}
\caption{Click trade-offs on real data}
\label{fig:click-real}
\end{minipage}
\end{figure}

Finally, on a sample from a week's worth of data across all keywords on Bing, Figures~\ref{fig:rev-real} and~\ref{fig:click-real} show exactly opposite results when trading off between clicks and impressions versus revenue and impressions.  For clicks, consistent with the theory, the optimal parameter to use is a per-impression reserve $\rho$.  For revenue, as raising the per-click reserve always increased revenue for most of the range we explored, using that as the sole parameter is optimal for most of the range.  Only once increasing it further begins to reduce revenue does it become optimal to add in a per-impression reserve.  Again, this is consistent with the theory, despite this data not reflecting equilibrium behaviour.  



\section{GSP Auctions with Classes and Templates}

We now turn to examine a more elaborate constrained optimization problem. Our motivation comes from changes in the world of sponsored search auctions, where recently several platforms have allowed enriching ads with features such as larger formats, reviews, maps, or phone numbers. A richer ad augmented with images or video is more likely to be clicked, but uses up more screen space. As the same space on the screen can be used in different ways or formats, the search engine faces the problem of determining which of many possible arrangements of advertising, or {\em templates} to use. Deng et al.~\citeyear{Deng2010} make an initial step to solve this problem by proposing and studying the multi-slots pricing scenario.

As a guiding example for our work, consider the setting studied by Goel and Khani~\citeyear{goel14}, where a search result page can either display one image ad or $k$ text ads. 
To analyse trade-offs in this type of setting, we require the following two extensions to the classical framework.
\begin{itemize}
\item There may be several \emph{classes} of ads/bidders that are disparate in nature: the bidders of a class bid only for ads of their own class. Importantly, we assume throughout this section that the sets of bidders of two classes are {\em disjoint}. 
In our example, one class consists of the advertisers interested only in text ads, while the other class consists of those interested only in image-rich ads.
\item There may be several mutually-exclusive \emph{templates}, each of which describe an entire slot layout. 
For example, one template allows a single image-rich ad (and no text ads) and another template allows $k$ text ads (but no image-rich ads).
\end{itemize}

With these extensions to the classical framework we will require the following updates to our notation and terminology.
\begin{itemize}
\item $x_{i, j} : A \rightarrow \mathbb{R}^n$: the allocation function of bidder $i$ on template $j$. Intuitively, this represents the slot effect given to bidder $i$ if the search engine decides to use template $j$. We therefore have the following constraint. Suppose template $j$ has slot effects $s_{j, 1}, s_{j, 2}, ..., s_ {j, n}$ (we can assume there is a slot for every player by setting $s_{j, i} = 0$ as needed). For every $\ell \in \{1, 2, ..., n\}$, let $\tilde{x}_{i, j, \ell} : A \rightarrow [0, 1]$ be functions such that for every $b$, we have $\sum_i \tilde{x}_{i, j, \ell}(b), \sum_\ell \tilde{x}_{i, j, \ell}(b) \leq 1$. Intuitively $\tilde{x}_{i, j, \ell}$ represents the probability that bidder $i$ receives slot $\ell$ (in template $j$). Then the $x_{i, j}$ are valid allocation functions only if there exists some $\tilde{x}_{i, j, \ell}$ such that for every $i, j,$ and $b$, we have $x_{i, j}(b) = \sum_\ell s_\ell \tilde{x}_{i, j, \ell}(b)$. If they satisfy this constraint and do not assign slots from one class to a bidder of another, then the $x_{i, j}$ are valid.
\item $x_i : A \rightarrow \mathbb{R}^n$: the actual allocation rule chosen. For every set of bids $b$, there must exist some template $j$ such that for all $i$ we have $x_i = x_{i, j}$. Note that the order of quantifiers allows the chosen template to depend on the bids.
\end{itemize}

Our first observation is that, for a truthful auction, this complexity does not introduce any significant issues. We are able to select the optimal template for our objectives, and as the domain remains a single-parameter one we are able to charge appropriate payments to induce truthful bidding. 
\begin{theorem}
\label{thm:optimal}
A truthful auction that maximizes $OBJ$ is described by the allocation rule $x_{i, j}$:
\begin{align*}
\max_j \max_x \sum_{i = 1}^n w_i \psi(b_i) x_{i, j}(b). 
\end{align*}
\end{theorem}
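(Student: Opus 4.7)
The plan is to mirror the proof of Theorem~\ref{thm:opt}, extending the argument to accommodate the pointwise choice of template. First, I would observe that the virtual-value reduction used in Theorem~\ref{thm:opt} goes through essentially verbatim in the templates/classes setting. Indeed, the revenue identity from Myerson only depends on the allocation being monotone in each bidder's own type and on the existence of an integrable allocation function $x_i:T\to\mathbb{R}^n$ that records the expected slot effect received; the structural interpretation of $x_i$ as arising from a pointwise choice of template $j$ and within-template allocation $x_{i,j}$ is irrelevant to that identity. Consequently, for any feasible mechanism in the templates setting,
\begin{equation*}
OBJ = -\alpha \sum_{i=1}^n w_i U_i(x,p,c_i) + \int_T \sum_{i=1}^n w_i \psi_i(t_i)\, x_i(t)\, f(t)\, dt,
\end{equation*}
and the first term is driven to zero exactly as before by Myerson payments, leaving a purely allocation-side maximization.

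Next, I would perform a pointwise maximization of the integrand. Because $\sum_i w_i \psi_i(t_i) x_i(t)$ depends on $t$ only through the current bid profile and the decision is made separately at each $b$, we can choose, for each $b$ independently, the template $j$ and the within-template slot assignment $x_{i,j}(b)$ that maximize $\sum_i w_i \psi_i(b_i)\, x_{i,j}(b)$. The feasibility of $x_{i,j}$ (the class-matching constraint, the $\tilde{x}_{i,j,\ell}$ doubly-stochastic decomposition, and $s_{j,\ell}$ weights) is a pointwise constraint in $b$, so the pointwise optimum is achievable. This yields precisely the rule $\max_j \max_x \sum_i w_i \psi(b_i)\, x_{i,j}(b)$ stated in the theorem.

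The one step that needs a bit of care is verifying that this pointwise-optimal allocation is monotone in each bidder's own bid, which is what lets us invoke Myerson to construct truthful payments and close the reduction. The expected allocation to bidder $i$ arises as the maximizer, in a finite collection of linear functionals of $w_i\psi_i(b_i)$, of the weighted sum $\sum_{i'} w_{i'}\psi_{i'}(b_{i'})\, x_{i',j}(b)$. By a standard envelope/subgradient argument for maxima of linear functions, the selected coefficient of $w_i\psi_i(b_i)$ (which is exactly $x_i(b)$) is non-decreasing in $w_i\psi_i(b_i)$, and hence non-decreasing in $b_i$ since $\psi_i$ inherits monotonicity from regularity of $\phi_i$ and from $\beta,\gamma\ge 0$. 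This rules out the worrisome case where an increased bid from $i$ flips the winning template to one in which $i$ is worse off.

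The main obstacle, as just indicated, is the template-switching step: a naive reading makes one worry that increasing $b_i$ could trigger a discrete change of template that strictly hurts $i$ and thereby breaks monotonicity. The envelope argument dispels this concern because ties in the outer $\max_j$ are broken consistently by a genuine maximizer of the same linear objective, so an upward bid perturbation can only (weakly) increase $i$'s realized slot effect. With monotonicity in hand, Myerson payments implement the allocation truthfully and maximize $OBJ$ among all truthful mechanisms, completing the argument.
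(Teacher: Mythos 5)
Your proposal is correct and follows essentially the same route as the paper: reduce $OBJ$ to the virtual-surplus integrand via the Myerson identity, maximize pointwise over templates and within-template assignments, and then verify monotonicity of the induced $x_i$ in $b_i$. Your ``envelope/subgradient argument for maxima of linear functions'' is exactly the exchange argument the paper writes out explicitly (combining the two optimality inequalities at $b$ and at $(b_i',b_{-i})$ to conclude $\psi(b_i)\geq\psi(b_i')$, a contradiction), so the two proofs differ only in level of detail, not in substance.
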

\begin{proof}
This allocation rule selects the template that has the highest value of $OBJ$ when allocated optimally (correctly optimizing revenue because we are still in a single parameter domain), so it only remains to check monotonicity to verify truthfulness.
Suppose for purposes of contradiction it is not. Then there exists a bidder $i$ such that when he raises his bid from $b_i$ to $b_i'$ the slot-effect assigned to him is decreased in value. Explicitly, we have that if $s_j$ is the allocated slot effect for $j$ when $i$ reports a bid of $b_i$, and $s_j'$ when $b_i'$ is reported, then $s_i' < s_i$. Now as the $s_j$ are chosen optimally for $b$, we have that:
\begin{align*}
&\sum_{j = 1}^n w_j \psi(b_j) s_j \geq \sum_{j = 1}^n w_j \psi(b_j) s_j'\\
&\Rightarrow \psi(b_i) \geq \frac{\sum_{j = 1, j \neq i} w_j \psi(b_j) (s_j' - s_j)}{w_i(s_i - s_i')}.
\end{align*}
Similarly, as the $s_j'$ are chosen optimally for $(b_i', b_{-i})$, we have that:
\begin{align*}
&w_i \psi(b_i') s_i' + \sum_{j = 1, j \neq i}^n w_j \psi(b_j) s_j' \geq w_i \psi(b_i') s_i + \sum_{j = 1, j \neq i}^n w_j \psi(b_j) s_j\\
&\Rightarrow \psi(b_i') \leq \frac{\sum_{j = 1, j \neq i} w_j \psi(b_j) (s_j' - s_j)}{w_i(s_i - s_i')}.
\end{align*}
Combining these two results, we find that $\psi(b_i) \geq \psi(b_i')$, and as $\psi$ is an increasing function, this is a contradiction.
\end{proof}

\subsection{Extensions of GSP}

We have shown that our earlier results generalize for {\em truthful} auctions. Unfortunately, as previously discussed, search engines such as Bing and Google use GSP instead. Previously, this was not a significant issue (except possibly for exact revenue maximization) due to the equivalence results of Varian~\citeyear{varian07}, Edelman et al.~\citeyear{eos07}, and more generally Roberts et al.~\citeyear{roberts13}. However, due to the combinatorial nature of the image/text auction, and templates more broadly, the distinctions between truthful and non-truthful auctions become critical. 
Indeed, it is not even clear what the ``right'' way is to generalize GSP to our setting, and as we shall see, simple candidates have problems even when solely optimizing for advertiser welfare.

An underlying principle in standard GSP auctions is that ``bidders pay the minimum required to keep their slot''. We show that several natural extensions of GSP that maintain this principle are provably {\em not equivalent} to the truthful mechanism and have poor equilibrium properties.

Before we begin, an extension of the notion of the SNE is in order.
\begin{definition}
A \emph{symmetric Nash equilibrium} is a set of bids $b$ that satisfy the conditions:
\begin{itemize}
\item for all $i$ and $b_i'$ we have: $x_i(b_i', b_{-i})(t_i - p_i(b_i', b_{-i})) \leq x_i(b)(t_i - p_i(b))$
\item for all $i$ and $j$ where $i$ and $j$ are bidding on the same class of ads, we have: $x_j(b)\left(t_i - \frac{w_j}{w_i}p_j(b)\right) \leq x_i(b)(t_i - p_i(b))$
\end{itemize}
\end{definition}
The first of the two conditions is what constitutes a Nash equilibrium and the second defines the symmetric/envy-free nature of the equilibrium. Note that the second does not encompass the first, unlike in the classical single-template scenario.

\subsubsection{Template-Considerate GSP}
We begin with perhaps the most straightforward extension to multi-template settings.
\begin{definition}
The \emph{template-considerate} GSP payment rule is such that the payment is the value of the minimum bid to ensure the bidder keeps his current slot. 
That is, the payment is the maximum of the minimum bid needed to be the next highest bidder's bid in the same class, and the minimum bid required to ensure the winning template remains optimal.
\end{definition}

Unfortunately, this payment rule suffers from a few critical issues. First, we show that even in the very restricted image/text setting, an SNE may fail to select the same outcome as the truthful mechanism.  Thus, the auction fails to optimize the desired objective.

\begin{theorem}
Suppose we use the template-considerate GSP payment rule. Then we may not implement the truthful outcome.
\end{theorem}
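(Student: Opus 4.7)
The plan is to establish this negative result by exhibiting an explicit small instance in which the revenue-minimal SNE of the template-considerate GSP has a payment strictly different from the truthful mechanism of Theorem~\ref{thm:optimal}. Concretely I would adopt the two-template setting of Goel and Khani~\citeyear{goel14}: template $I$ has a single image slot of effect $s_I=1$ usable only by image bidders, while template $T$ has $k=2$ text slots of effects $s_{T,1}=1$ and $s_{T,2}=\tfrac12$ usable only by text bidders. To avoid virtual-value complications I would work with welfare as the objective (so $\psi$ is the identity) and take all ad effects equal to $1$, then pick concrete values---one image bidder with value $v_I=10$ and two text bidders with values $u_1=8$ and $u_2=2$---so that the truthful mechanism strictly prefers template $I$ (since $10>8+\tfrac12\cdot 2 = 9$) and charges the image bidder the VCG-style payment $u_1+\tfrac12 u_2=9$.

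Next I would exhibit the bid profile with $b_I=v_I$ and both text bidders bidding $0$, and argue that it is an SNE in which the image bidder pays only $b_1+\tfrac12 b_2 = 0$. The within-class envy-free condition is vacuous for the text class (neither bidder is allocated a slot, so both sides of the inequality are $0$) and the image class contains a single bidder. For the best-response condition: (i) the image bidder's payment equals $b_1+\tfrac12 b_2=0$ regardless of his own bid as long as $b_I$ keeps template $I$ optimal, so no $b_I$-deviation changes his utility; (ii) any text-bidder deviation that flips to template $T$ must push $b_1+\tfrac12 b_2$ above $v_I s_I=10$, after which the template-considerate rule charges the deviator at least $v_I s_I$ divided by his own slot effect---namely at least $10$ for the top text slot and $20$ for the second---each of which strictly exceeds his value ($u_1=8<10$, $u_2=2<20$), yielding negative utility.

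Since this SNE coincides with the truthful mechanism on the allocation but charges the image bidder $0$ instead of $9$, the SNE outcome differs from the truthful outcome; crucially, it is the revenue-minimal SNE---the refinement that in classical single-template GSP exactly implements the truthful outcome~\cite{aggarwal06,eos07,varian07,roberts13}. The main obstacle to watch for is the cross-template deviation analysis, because the template-considerate rule couples each text bidder's hypothetical payment to the image bidder's value through the ``keep-template-optimal'' clause; working at the degenerate all-text-bids-zero profile sidesteps the awkward algebra, as every template-flipping deviation becomes prohibitively expensive and the counterexample becomes fully transparent. The construction also suggests the underlying pathology: in the classical single-template case, the second-place bidder's bid is pinned down near his value by envy-freeness, whereas here text bidders who receive no slot face no within-class envy constraint at all, leaving the image bidder's ``keep-template-optimal'' payment unmoored from the truthful externality.
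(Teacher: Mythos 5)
Your proposal is correct, but it reaches the conclusion by a genuinely different failure mode than the paper. The paper's counterexample (four text bidders valued $100, 50, 25, 10$ against a three-slot text template, and two image bidders valued $120, 110$ against a one-slot image template) exhibits an SNE in which the class that loses bids zero and the \emph{wrong template} is selected, so the allocation itself differs from the truthful one --- something that cannot happen in the single-template setting, where every SNE is rank-score efficient. Your example instead keeps the truthful allocation (the image template still wins, since $10 > 9$) and shows that the revenue-minimal SNE charges the image bidder $0$ rather than the Myerson payment $9$, because the unallocated text bidders face no within-class envy constraint and may bid $0$, leaving the ``keep-template-optimal'' price unmoored; this exhibits an SNE with payments strictly \emph{below} the truthful level, which is likewise impossible classically. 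Both are legitimate refutations of the classical equivalence (same slots \emph{and} payments in the lowest SNE), and both hinge on exactly the reliance the paper flags after its proof, namely losing-class bidders bidding zero. Two small caveats on your version: first, in your instance the profile where both text bidders bid their true values is \emph{also} an SNE and it implements the truthful outcome exactly, so your construction does not rule out a good SNE --- it only shows that the bidder-optimal/revenue-minimal selection fails, whereas the paper's allocation-level failure is the starker pathology; second, a deviating second text bidder who outbids the first lands in the top text slot, so his template-maintenance price is $10$ rather than the $20$ you quote --- since $10 > 2$ the deviation remains unprofitable and the argument is unaffected.
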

\begin{proof}
Suppose we are maximizing welfare and we have the following setup with all the $w_i = 1$.
\begin{itemize}
\item Bidders:
\begin{itemize}
\item Four text ad bidders with true values $100$, $50$, $25$, and $10$.
\item Two image ad bidders with true values $120$ and $110$.
\end{itemize}
\item Templates:
\begin{itemize}
\item Template $1$: three text ads with slot effects $1$, $1 - \epsilon$, and $1 - 2\epsilon$.
\item Template $2$: a single image ad with slot effect $1$.
\end{itemize}
\end{itemize}
In this given setup one problematic SNE is when the two image ad bidders bid their true value, while the text ad bidders bid zero. On the other hand, the truthful outcome clearly selects the first template.
\end{proof}
Note that this example relies on losing bidders whose class is not shown bidding 0.  If instead we add the requirement that such bidders bid their true value, it is possible to construct examples where SNE do not exist.

Second, In more general settings, an SNE may not even exist (even without such an added requirement).  
Essentially, the issue is that the constraints on bids imposed by the requirement for a lack of envy on a given template may force the bids to be inconsistent with that template being chosen.  
This example does rely on bidders being {\em conservative}~\cite{paesleme09} and not bidding above their true value (which is a dominated strategy).

\begin{theorem}
\label{thm:tcgsp_nonexistence}
Suppose we use the template-considerate GSP payment rule and bidders do not bid higher than their true value. 
Then there may not exist any SNE.
\end{theorem}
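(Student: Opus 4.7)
The plan is to exhibit a concrete counterexample with two templates (a single image slot versus several text slots) for which every candidate bid profile fails to be an SNE. The overall strategy is a case analysis on which template is selected at the putative SNE: in each case, I will show that the envy-free inequalities among bidders of the winning class, combined with the template-considerate payment rule and the restriction that no one bids above their true value, force the winning template not to be the optimal one, contradicting the definition of the payment rule.

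First I would build on the construction used in the preceding theorem, fixing $w_i=1$ throughout and choosing text-ad true values $v_1>v_2>\cdots>v_k$ and image-ad true values $u_1>u_2$ so that the truthful (welfare-maximizing) choice of template depends delicately on the reported bids. The key design principle is that the text template is slightly preferred at the truthful profile, yet the required "minimum bid to preserve the winning template" payments under template-considerate GSP are large enough to upset envy-freeness among the text bidders.

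Next I would carry out the case analysis. Suppose the text template wins at some SNE. The symmetric/envy-free inequalities among text bidders, as in the classical Varian-Edelman-Ostrovsky-Schwarz analysis, pin the payment of the top text bidder to a value determined by the next highest text bid and the ratios of adjacent slot effects. However, the template-considerate rule takes the \emph{maximum} of that quantity and the minimum bid required to keep the text template optimal against the image template. Because the image bidders are conservative but still permitted to bid up to their true values, I will choose $u_1,u_2$ so that this template-preservation payment strictly dominates the intra-class Varian payment. I then show this inflated payment makes some lower text bidder strictly prefer the top slot at its price, breaking envy-freeness; lowering the top text bidder's bid to restore envy-freeness pushes the text template below the image template in the ranking, so the text template is no longer actually chosen, a contradiction. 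A symmetric argument handles the case where the image template wins: the image winner must pay enough to keep the image template optimal against the text bidders, and conservative bidding by the text bidders combined with this payment will either violate envy-freeness between $u_1$ and $u_2$ or force $u_1$'s bid so low that the text template becomes optimal. The case in which no one bids (or everyone bids zero) is ruled out because the top bidder in either class would then have a profitable deviation to a small positive bid.

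The main obstacle will be tuning the slot-effect parameters and the values $v_i,u_j$ so that both sub-cases genuinely break. In particular, the gap between the intra-class Varian payment and the template-preservation payment has to be strictly positive at the candidate profiles, and lowering the winner's bid to cure envy must measurably tip the template comparison. I expect a two-text-slot template with slot effects $1$ and $1-\epsilon$ plus two image bidders with values close to, but straddling, the total weighted welfare of the text template to give enough leverage to make all inequalities strict for a suitable $\epsilon$. Once the numbers are fixed, the remaining arguments are mechanical verifications of the envy-free inequalities and the template-optimality condition, and I would present the counterexample compactly as a table of bidders, templates, and the forced payments under each hypothesized outcome.
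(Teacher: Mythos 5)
There is a genuine gap in your plan, and it is precisely the issue the paper flags between its first impossibility example and this theorem. Your construction puts the two classes in an image/text configuration where the losing class receives \emph{no} slot in the winning template. But then nothing in the SNE definition constrains the losing class's bids from below: the envy-freeness condition only relates bidders within the same class to the slots that class actually receives, and a bidder receiving slot effect zero satisfies the Nash condition by bidding zero as long as no unilateral raise (capped at their true value) can flip the template. Consequently the ``template-preservation'' component of the template-considerate payment, which is the engine of your contradiction, need not bind: the profile in which the winning class plays its ordinary within-template Varian/EOS equilibrium and the losing class bids zero is a perfectly good SNE. This is exactly why the paper's first (weaker) theorem only shows that the truthful outcome may fail to be implemented, and why it remarks that forcing losing bidders to bid their true values is an \emph{added} requirement.

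The paper's proof of nonexistence avoids this by giving \emph{both} classes nonzero slots in \emph{both} templates, with a deliberate asymmetry in the slot-effect profiles: a class that is favored by a template gets three nearly identical slots ($1$, $1-\epsilon$, $1-2\epsilon$), so envy-freeness from the top bidder toward the second's nearly-as-good, cheaper slot caps the second bid well below its true value (at roughly the third value), bounding that template's score from above; a class that is unfavored gets geometrically decaying slots ($\epsilon$, $\epsilon^2$, $\epsilon^3$), so envy-freeness forces each winning bidder to bid essentially their true value, bounding the other template's score from below. With symmetric values ($350,300,200,100$ in each class) the favored class can total at most about $750$ while the unfavored class must total at least about $800$, so whichever template is hypothesized to win, the other one scores higher --- a contradiction at the level of template selection, with no appeal to inflated payments at all. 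To repair your argument you would need to import this ingredient: give the losing class small but sharply decaying slots in the winning template so that envy-freeness, not the payment rule, forces their bids up.
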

\begin{proof}
Suppose we are maximizing welfare and we have the following setup where $\epsilon \approx 0$ with all the $w_i = 1$.
\begin{itemize}
\item Classes $A$ and $B$ (that are entirely symmetrical).
\item Bidders:
\begin{itemize}
\item Four class $A$ bidders with true values $350$, $300$, $200$, and $100$.
\item Four class $B$ bidders with true values $350$, $300$, $200$, and $100$.
\end{itemize}
\item Templates:
\begin{itemize}
\item Template $1$: three class $A$ slots of effects $1$, $1 - \epsilon$, and $1 - 2\epsilon$, and three class $B$ slots of effects $\epsilon$, $\epsilon^2$, and $\epsilon^3$
\item Template $2$: three class $B$ slots of effects $1$, $1 - \epsilon$, and $1 - 2\epsilon$, and three class $A$ slots of effects $\epsilon$, $\epsilon^2$, and $\epsilon^3$
\end{itemize}
\end{itemize}

Suppose that the first template is the winning template in an SNE. We claim the four class $A$ bidders can bid at most (approximately) $350$, $200$, $200$, and $100$ respectively. All but the second bidder are bidding their true value so only the second bidder requires explanation. They cannot bid more than (approximately) $200$ as otherwise the first bidder will envy the second bidder's slot for its price. This is due to the near identical value of the slots. Thus, the objective value of the first template is at most (approximately) $350 + 200 + 200 = 750$. Alternatively, we claim that the four class $B$ bidders must bid at least (approximately) $300$, $300$, $200$, and $100$ respectively. The first bidder must bid at least the second bidder's bid and the last bidder must bid their true value as they will not receive a slot. The second bidder must bid approximately their true value as the first slot is vastly superior to the second and so they must bid near their value to ensure that they are not envious of the first bidder. Similarly, due to the vast superiority of the second slot to the third the third bidder must bid approximately their true value. Thus, the objective value of the second template is at least (approximately) $300 + 300 + 200 = 800$. As this is larger than the first template's worth this is a contradiction. The second template cannot be the winning template in an SNE via similar analysis.
\end{proof}
While our example uses ties in valuations and slot effects for ease of exposition, the values above can be slightly perturbed so that they are not tied and the analysis is not fundamentally affected. Such issues are therefore not zero probability events.

Third, even if an SNE exists, the objective value may be arbitrarily worse in comparison to the truthful auction.
\begin{theorem}
\label{thm:tcgsp_unoptimal}
Suppose bidders do not bid higher than their true values. Then there exists an SNE of an auction governed by the template-considerate GSP payment rule that may be arbitrarily worse than the worst NE of the corresponding truthful auction. That is, if $OBJ_{GSP}$ is the worst $OBJ$ value in an SNE for the template-considerate GSP auction, and $OBJ_{truthful}$ is the worst $OBJ$ value in a NE for the corresponding truthful auction, then $OBJ_{GSP}/OBJ_{truthful}$ can be arbitrarily close to zero (with a sufficiently large number of bidders and slots).
\end{theorem}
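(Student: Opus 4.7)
The strategy is to scale up the construction used in the two preceding theorems into a family of instances parameterized by $n$, so that the bad SNE value stays bounded while the truthful optimum grows without bound. Consider $n$ text-class bidders each with true value $v = V(1-1/n)$ and one image-class bidder with value $V$; let Template~1 contain a single image slot of slot effect $1$, and let Template~2 contain $n$ text slots each of slot effect $1$. Set all $w_i = 1$, and take $OBJ$ to be welfare (so that $\psi_i = t_i$ on reported values).

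First I would verify that, under the template-considerate GSP rule, the profile in which every text bidder bids $0$ and the image bidder bids $V$ is an SNE. Template~1 is selected since Template~2's weighted bid-sum equals $0$. The envy condition within the text class is vacuous because every text bidder receives nothing, the cross-class case is not imposed by the SNE definition, and the image bidder is alone in his class. For unilateral deviations: conservatism restricts a text bidder to bids at most $v < V$, and because the other text bidders bid $0$, no such deviation raises Template~2's total above Template~1's --- so the template does not flip, and the deviator still receives nothing. The image bidder, facing no class-A competitor and paying $0$ (since Template~1 dominates Template~2 for any positive bid of his), has no useful deviation either. Hence the profile is an SNE with $OBJ_{GSP} = V$.

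Next I would lower-bound $OBJ_{truthful}$. By Theorem~\ref{thm:optimal} the truthful mechanism selects the template that maximises $\sum_i w_i \psi_i(t_i)x_{i,j}(t)$, and evaluated at the true types this picks Template~2 because $nv = V(n-1) > V$. Since truthful bidding is weakly dominant, it constitutes a NE achieving welfare $nv = V(n-1)$, so $OBJ_{truthful} \geq V(n-1)$. Combining the two bounds,
\[
\frac{OBJ_{GSP}}{OBJ_{truthful}} \;\leq\; \frac{V}{V(n-1)} \;=\; \frac{1}{n-1},
\]
which tends to $0$ as $n \to \infty$.

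The main obstacle is reconciling two competing requirements: the bad profile must withstand unilateral deviations (an SNE), while the truthful optimum must be much larger. Conservatism is precisely what threads the needle --- it caps each deviator at $v < V$, so a single text bidder cannot flip the template, yet it is consistent with the coordinated truthful report that does flip the template. A secondary subtlety is the phrase ``worst NE'' for the truthful auction: the all-zero profile is also a weak NE there, so the comparison is made meaningful by taking the truthful side to be the weakly dominant (truthful-bidding) equilibrium, a standard refinement; slight perturbations of $v$ and of the slot effects can also be used to make the truthful NE strict while preserving Claims~1 and~2.
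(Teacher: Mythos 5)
Your construction establishes the bad SNE of the template-considerate GSP auction, but it does not establish the right-hand side of the ratio: the theorem compares against the \emph{worst} NE of the corresponding truthful auction, and in your instance the truthful auction has an equally bad NE. Concretely, the profile in which every text bidder bids $0$ and the image bidder bids $V$ is also a Nash equilibrium of the truthful auction: a unilateral deviation by a text bidder raises Template~2's value to at most $v = V(1-1/n) < V$ (by conservatism), so Template~1 is still selected and the deviator still receives nothing. Hence $OBJ_{truthful} \leq V$ and the ratio your instance certifies is $1$, not $1/(n-1)$. You notice this and propose restricting attention to the weakly dominant (truthful-bidding) equilibrium, but that changes the statement being proved; and perturbing the values or slot effects cannot repair it, because no perturbation gives a losing text bidder a strict incentive to raise a bid that does not change the outcome. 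This is exactly the weakness the paper flags after its first template-considerate example (``this example relies on losing bidders whose class is not shown bidding 0''), which is why that simpler example is used only for the weaker claim that GSP need not implement the truthful outcome.

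The paper's proof closes this gap with a single-class construction in which \emph{both} templates assign strictly decreasing, nonzero slot effects to all but one bidder. Because nearly every bidder is shown under either template, the standard leapfrogging argument applies in every NE of the truthful auction: bidder $i$ must bid at least the true value of the next-highest bidder (else that bidder would profitably outbid them), while conservatism caps $b_i$ at $t_i$. This pins every bid into a narrow interval, and the values and slot effects are arranged so that for any such bids the high-welfare template wins; hence \emph{every} truthful NE has welfare $\Theta(m^2)$, while a coordinated low-bid SNE of template-considerate GSP selects the other template and achieves only $\Theta(m)$. To salvage your approach you would need an instance in which the truthful auction's entire NE set is similarly controlled---each bidder must have something to lose by underbidding in every candidate equilibrium---which your winner-take-nothing, identical-slot design does not provide.
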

We defer the proof to Appendix~\ref{sec:tcgsp_unoptimal_proof}.

Aside from these crucial existence and optimality issues, we note that the payment rule can be counter-intuitive to the bidders. 
For example, it is possible that by increasing one's bid, a bidder does not affect his own slot allocation, but decreases the payment required of their competitors (who are being priced by the minimum bid they need to make to maintain the template).

\subsubsection{Template-Indifferent GSP}
With the failure of the template-considerate GSP rule, an alternate extension we can consider is to simply ignore the existence of other templates for pricing purposes.
\begin{definition}
The \emph{template-indifferent} GSP payment rule is one where after the template is selected, GSP is performed as if the winning template were the only one.
\end{definition}
An immediately obvious issue with this payment rule is that the highest bidder of every class has incentive to raise his bid arbitrarily high. This can be rectified by assuming for the purposes of template selection the highest bidder's bid is equivalent to the second highest bid, however this does undermine the optimality of the allocation.  Even with this fix, the three problems identified with template-considerate GSP remain.  As essentially the same examples work, we omit them. We have also explored other extensions to GSP, but have been unable to find one with desirable equilibrium properties in the general case.

%
%
%

\subsection{Positive Results For Restricted Cases}

The lack of a natural generalization of GSP with desirable properties is disheartening. Fortunately, with restrictions on the set of templates we can get somewhat more favourable results. The most trivial restriction is to a single template, where the classical results apply.  
Only slightly less trivial is the case where one class has the same set of slots in every template.  Then, at least for this class, we are effectively again in the single-template setting.

More interestingly, we can return to our guiding example of the image/text setting explored by Goel and Khani~\citeyear{goel14}.  In this setting, VCG is known to have the undesirable property that adding bidders can reduce revenue.  They seek to rectify this by designing a truthful {\em revenue monotone} mechanism which they call MITA, consisting of a monotone allocation rule and the payments to make it truthful.  Interestingly, while MITA does not optimize $OBJ$, we are able to show that the MITA allocation with template-considerate GSP payments does have an SNE that implements the truthful MITA outcome.
\begin{definition}
Let $\psi_i(t_i)$ be the objective value of the text ad with the $i$th highest value (using a linear approximation to $\phi$ if it is non-linear), $\psi_I(t_I)$ be the value of the best image ad, and $s_{I,1}$ and $s_{T,1} \ldots s_{T,k}$ be the slot effects.
Define $C = \{ 1 \leq j \leq k ~|~ \psi_j(t_j) \sum_{i = 1}^j s_{T,i} \geq \psi_I(t_I)s_{I,1} \}$.
The MITA allocation is the first $\max C$ text ads if $C$ is nonempty and the image ad otherwise%
\footnote{Goel and Khani give their allocation rule for the special case where slots are identical, advertisers have identical qualities, and the objective is to maximize welfare.  We have generalized their allocation rule to our setting.}.
\end{definition}
\begin{theorem}
The MITA allocation with template-considerate GSP payments has an SNE that implements the truthful outcome.
\end{theorem}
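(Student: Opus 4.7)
The plan is to exhibit an explicit bid profile that reproduces the truthful MITA allocation and payments under template-considerate GSP, then check both the Nash and envy-free conditions of an SNE. I focus on the nontrivial case $C\neq\emptyset$ with $j^\ast=\max C$; the complementary case, where MITA selects only the image ad, reduces to a standard single-slot GSP argument since text bidders are then losers whose bids can be set to their true values.

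For the construction, let $T_1,\ldots,T_k$ denote the text bidders in decreasing order of $\psi$-value and let $I$ be the top image bidder. The losing bidders---the image bidder and $T_{j^\ast+1},\ldots,T_k$---bid their true value. The winning text bidders' bids $b_1\ge\cdots\ge b_{j^\ast}$ are defined from the bottom up: $b_{j^\ast}$ is set to the Myerson truthful payment for slot $j^\ast$, namely the maximum of the bid required to outrank $T_{j^\ast+1}$ and the threshold bid at which $j^\ast$ would drop out of $C$; for $i<j^\ast$, $b_i$ is the envy-free indifference point between slots $i$ and $i+1$, with $p_{i+1}$ computed under the template-considerate rule from the already-fixed lower bids. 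I then verify that MITA at these bids still selects template $1$ in the intended order, that the template-considerate payments coincide with the Myerson payments (the within-class GSP charge fixes slots $1,\ldots,j^\ast-1$ while the template-preservation charge pins down slot $j^\ast$), and that envy-freeness within the text class follows directly from the recursive construction. Envy within the image class is vacuous since only one image bidder is displayed.

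The main obstacle is the Nash verification for the image bidder, who could try to bid above their true value to flip the winning template to $2$ and claim the image slot. Under template-considerate GSP, such a deviation triggers a payment equal to the per-click bid $v$ at which template $2$ just ties template $1$, i.e., the solution of $w_I\psi_I(v)s_{I,1}=\sum_{i=1}^{j^\ast}w_{T_i}\psi_i(t_i)s_{T,i}$. The defining inequality for $j^\ast\in C$ combined with monotonicity of $\psi_I$ forces $v\ge t_I$, so this charge is at least the image bidder's true value per click and the deviation is unprofitable. Deviations by text bidders reduce to standard GSP arguments: downward moves by winning text bidders are absorbed by envy-freeness (possibly combined with the template-switch deterrent that already defines $b_{j^\ast}$), and upward moves by losing text bidders face the usual slot-acquisition threshold. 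Assembling these pieces yields an SNE whose allocation and payments coincide with the truthful MITA outcome.
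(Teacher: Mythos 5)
Your construction is essentially the paper's: losers and the image bidder bid their true values, and the winning text bidders play the lowest-SNE bids of a text-only auction in which the image ad's value acts as a per-impression reserve (the paper simply cites Roberts et al.\ for the existence of that lowest SNE rather than building it recursively, and you additionally make explicit the image bidder's template-flipping deviation, which the paper leaves implicit). One small imprecision to repair: the threshold bid $v$ at which the image bidder can flip the template is determined by the text bidders' \emph{bids}, not their true values, so $v\ge t_I$ follows from the fact that your constructed bids keep the text template winning (which your reserve-style floor on $b_{j^\ast}$ guarantees) rather than directly from the inequality defining $j^\ast\in C$; this does not affect the overall argument.
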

\begin{proof}
First, suppose the image ad is winning in the truthful outcome.  We construct the SNE by having all ads bid their true value.  
This clearly implements the truthful MITA outcome because the payment rule and bids are exactly the same as in truthful MITA.  
No text ad wishes to deviate, because their payment for a bid high enough to be shown is the same as it would be in the truthful auction.

Now consider the case where the text ads win in the truthful outcome.  Again, the image ad bidders bid their true value.  Let $j^*$ be the number of ads shown in the truthful MITA outcome.  The text ads make the bids they would make in the lowest SNE of the auction where the text ad template is the only template but there is a per-impression reserve of $\psi_I(t_I)s_{I,1}/\left(\sum_{i = 1}^{j^*} s_{T,i}\right)$ (such an SNE exists by a result from~\cite{roberts13}).  The truthful outcome of this single-template auction is the same as the truthful MITA auction, so since the SNE implements the former it implements the latter.
\end{proof}

While this provides a setting where positive results are possible, we note that it is somewhat special as one class has only a single slot.  
While limited generalisations appear possible when each class has multiple slots, they show a greatly reduced number of ads (essentially only those that would qualify to be shown if they were the first slot).

Another special case where we can provide positive results is one where each class corresponds to a block of ads, e.g. there could be a block of text ads somewhere on the page, a block of image ads, and a block of product ads.  
The assumption we make about these blocks is similar in spirit to the assumption in the traditional GSP model that the probability of an ad being clicked can be factored into a slot effect and an advertiser effect.  We assume that the slot effect can be further factored into a term that depends on the ordering within the block and a term that depends on where the block is shown.  Equivalently, we can phrase this in terms of the ratio of slot effects in different templates, as in the following definition.

\begin{definition}
A \emph{class selection} set of templates is one where whenever a class appears in multiple templates, the ratio of slot effects which are not zero are constant. More formally, if a class appears in a template with slot effects $s_1, s_2, ..., s_a$, and it appears in another template with slot effects $s'_1, s'_2, ..., s'_a$ then $s_i/s'_i$ is constant.
\end{definition}

Consider any monotone allocation rule that ranks bidders using a rank score and chooses the template based solely off the second highest ranked bidder of each class (a {\em second-highest} allocation rule).  Such rules could be relatively efficient in some settings, but in general cannot provide any bounded approximation to $OBJ$ because the value of the first bidder is ignored and can be arbitrarily high.  Nevertheless, we show that this allocation rule has an SNE with GSP payments.
\begin{theorem}
With a class selection template set with a second-highest allocation rule and template-indifferent GSP payment rule, an SNE is guaranteed to exist.
\end{theorem}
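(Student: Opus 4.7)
The plan is to construct an SNE explicitly by exploiting two structural features of the setting. First, under class selection the slot effects within any class $c$ factor as $\alpha_{T,c}\sigma^c_k$, with $\sigma^c_k$ a template-independent vector of ratios and $\alpha_{T,c}$ a positive template-class scalar. Second, under template-indifferent GSP the price charged to a slot-$k$ bidder in class $c$ equals $w^c_{k+1}b^c_{k+1}/w^c_k$, depending only on bids within class $c$, not on the template. Together these imply that the within-class envy-freeness inequality $\alpha_{T,c}\sigma^c_k(t^c_k-p^c_k)\ge\alpha_{T,c}\sigma^c_{k'}(t^c_k-p^c_{k'})$ is scale-invariant in $\alpha_{T,c}$: an envy-free profile for one template containing $c$ is envy-free for every template containing $c$.

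For each class $c$, let $\beta^c$ be the lowest SNE of the single-template GSP auction with slot effects $\sigma^c_k$; its existence is standard (Varian, Edelman--Ostrovsky--Schwarz, Roberts et al.) and uses only the ratios $\sigma^c_k$, which by the preceding paragraph are well-defined independently of $T$. Assemble the combined profile $\beta=(\beta^c)_c$ and let $T^*$ be the template the allocation rule returns at $\beta$. By the scale invariance above, the symmetric/envy-free component of the SNE condition is already satisfied within each class at $\beta$ under $T^*$.

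It remains to verify the Nash condition: no bidder has a profitable unilateral deviation. Split the deviations into those that preserve the winning template and those that change it. Template-preserving deviations are identical to deviations in the single-template GSP auction for class $c$, and hence unprofitable because $\beta^c$ is a Nash equilibrium of that auction. Because the allocation rule picks the template purely from the second-highest rank scores of each class, only a bidder in the top two positions of some class $c$ can change the template by deviating; any other bidder's deviation leaves every $r^c_2$ unchanged. The top bidder dropping below the current second bidder in class $c$ forfeits the best slot, and this is unprofitable by envy-freeness within class $c$ combined with monotonicity of the rule applied to the new template.

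The hard step is handling a second-ranked bidder raising their bid: this strictly increases $r^c_2$ and, by monotonicity of the allocation rule, can only select a template $T'$ with $\alpha_{T',c}\ge\alpha_{T^*,c}$, apparently yielding a weakly better slot effect at the same GSP price. My planned resolution is to argue, using the lowest-SNE characterization of $\beta^c$ (where $\beta^c_2$ is the bid at which the top bidder in class $c$ is indifferent between slots $1$ and $2$), that any such upward template-switching deviation either (a) pushes the bidder past the top rank, leading to a strict slot demotion within class $c$ that outweighs the $\alpha$-gain, or (b) leaves $\alpha_{T',c}=\alpha_{T^*,c}$ because the joint monotonicity of the allocation rule across the full bid vector forces the template partition to be locally constant at $\beta$. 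If this joint-monotonicity argument proves insufficient in full generality, a natural fallback is to replace the lowest SNE within each class with the profile that pushes $\beta^c_2$ up to its envy-free maximum subject to $\beta^c_2\le\beta^c_1$, closing the upward deviation window directly; pinning down exactly which of these variants works is where the bulk of the technical effort will go.
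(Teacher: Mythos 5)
Your setup matches the paper's: per-class single-template SNEs, the observation that class selection makes within-class envy-freeness scale-invariant across templates, and the reduction of the Nash check to whether anyone can profitably move the second-highest rank score of their class. But you have correctly located the crux --- deviations that raise $r^c_2$ and thereby switch to a template more favourable to class $c$ at an unchanged GSP price --- and then left it unresolved. Neither of your two proposed variants closes it. Variant (a) fails because there is no reason the template selection should be ``locally constant'' at the equilibrium profile; the whole point of the counterexamples elsewhere in the paper is that it need not be. Variant (b) fails because envy-freeness is a property of the equilibrium profile, not a constraint on deviations: even if you push $\beta^c_2$ to its envy-free maximum, as long as $\beta^c_2<\beta^c_1$ the second bidder can still deviate strictly above that maximum, keep slot $2$ at the same price (set by the third bid), and collect the improved slot effect. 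Also, your claim that only the top two bidders of a class can move $r^c_2$ is false as stated: a third-ranked bidder who raises their bid above the second-ranked bid becomes the new second-highest and raises $r^c_2$.

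The paper's fix is a single move you are missing: after computing the per-class SNE, \emph{lower the first bidder's bid to equal (or sit just above) the second bidder's bid}. With $b^c_1=b^c_2$ the second-highest rank score of class $c$ becomes invariant to every unilateral upward deviation --- if the second (or any lower) bidder jumps above $b^c_1$, the new second-highest score is the old $b^c_1$, which equals the old $b^c_2$, so the template cannot be improved by anyone acting alone; downward deviations only worsen the template for the deviator's class and are handled by envy-freeness plus monotonicity. This pinching of the top two bids is exactly what makes the ``no single player can unilaterally raise the second-highest bid of their class'' claim true, and it is the step your proposal would have to supply to be a proof rather than a plan.
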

\begin{proof}
For every class take a template in which that class appears and then determine an SNE for that class in that template. Furthermore, alter the first bidder's bid to be equivalent (or slightly higher) than the second bidder's bid. Note that (for this class) this is not only still an SNE in the template, but by the nature of class selection templates the bids are then in SNE for all templates. Now upon determining the bids for all classes by doing this for every class, take the optimal template $T$ via the second-highest allocation rule. We claim that this is an SNE. This is because no single player can unilaterally deviate to raise the second highest bid of their class and the bidders are already in an SNE for the winning template $T$.
\end{proof}

Unfortunately, this positive result appears to rely both on restricting the setting (to class selection templates) as well as the allocation rule (to an inherently inefficient one).  The following theorem shows that with the standard allocation rule (optimizing $OBJ$) there may not exist an SNE for conservative bidders.  As the example is somewhat lengthy, we defer the proof to Appendix~\ref{sec:tigsp_nonexistence_proof}.
\begin{theorem}
\label{thm:tigsp_nonexistence}
Suppose we use the template-indifferent GSP payment rule with the standard allocation rule and furthermore, do not allow bidders to bid higher than their true value as before. Then there may not exist any SNE.
\end{theorem}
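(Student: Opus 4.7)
The plan is to adapt the symmetric two-class, two-template counterexample from the proof of Theorem~\ref{thm:tcgsp_nonexistence} to the template-indifferent rule. Take classes $A$ and $B$, each with four bidders of true values $350, 300, 200, 100$ and unit ad effects, and two mirror-image templates: template $1$ assigns class $A$ slot effects $1, 1-\epsilon, 1-2\epsilon$ and class $B$ slot effects $\epsilon, \epsilon^2, \epsilon^3$, while template $2$ is the reflection. The objective is welfare, so $\psi_i(b) = b$. I would argue by contradiction: suppose an SNE exists in which template $1$ is the winning template.

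The argument splits into (i) capping the class-$A$ contribution to the objective of template $1$ from above, and (ii) lower-bounding the class-$B$ contribution to the objective of template $2$. Part (i) is verbatim from Theorem~\ref{thm:tcgsp_nonexistence}: the envy of $A$'s top bidder toward $A$'s second slot, using the near-equal slot effects $1$ and $1-\epsilon$, forces the second $A$-bid to be at most $\approx 200$, so class $A$ contributes at most $\approx 350 + 200 + 200 = 750$ to $OBJ(\text{template } 1)$.

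Part (ii) is the new step required for template-indifferent GSP, and it is where the argument diverges from the template-considerate case. Inside the winning template, template-indifferent GSP charges ordinary GSP prices as if it were the only template, so the $k$th class-$B$ bidder pays the $(k+1)$th class-$B$ bid. Because the fourth class-$B$ bidder receives no slot, its envy toward the third $B$-slot reduces to $\epsilon^3(100 - p_3) \leq 0$, which combined with conservative bidding pins the fourth $B$-bid to exactly $100$. The factor-of-$\epsilon$ gap between consecutive class-$B$ slot effects then makes every subsequent envy inequality truth-forcing on the next bid: the third $B$-bidder's envy toward the second $B$-slot, $\epsilon^2(200 - p_2) \leq \epsilon^3(200 - p_3)$, collapses to the third $B$-bid being at least $200 - O(\epsilon)$; cascading upward gives the second $B$-bid $\approx 300$, and bid monotonicity forces the first $B$-bid $\geq 300$.

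Plugging these bounds into template $2$'s objective yields $OBJ(\text{template } 2) \geq 300 + (1-\epsilon)\cdot 300 + (1-2\epsilon)\cdot 200 \approx 800$, while Step (i) gives $OBJ(\text{template } 1) \leq 750 + O(\epsilon)$. For sufficiently small $\epsilon$ the allocation rule therefore selects template $2$, contradicting the assumption; the left-right symmetry of the instance forecloses the case of template $2$ winning as well, so no SNE can exist. The main obstacle I expect is making Step (ii) rigorous: one must verify that the sharp $\epsilon$-ratios between adjacent class-$B$ slots truly pin each class-$B$ bid to its true value up to $O(\epsilon)$ slack, with the unallocated fourth bidder providing the seed of the cascade, and that this chain of constraints is actually imposed by the allocation and payments realized under template-indifferent GSP. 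I would also check, as in the remark after Theorem~\ref{thm:tcgsp_nonexistence}, that small perturbations removing ties in values and slot effects preserve all the strict inequalities, so non-existence is not an artifact of the ties.
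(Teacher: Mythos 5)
Your argument is correct, but it is not the route the paper takes for this theorem. You recycle the symmetric two-class instance from Theorem~\ref{thm:tcgsp_nonexistence} and derive the contradiction entirely from the \emph{envy} (symmetry) half of the SNE definition: under template-indifferent pricing the losing class still occupies the $\epsilon,\epsilon^2,\epsilon^3$ slots of the realized template and pays within-template GSP prices, so the factor-$\epsilon$ gaps between adjacent slots force each losing-class bid up to its true value minus $O(\epsilon)$ (seeded by the unallocated fourth bidder, exactly as you describe), which makes the losing template's objective exceed the winning one's $750$ cap --- a contradiction with the standard allocation rule having chosen it, and symmetry kills the other case. This checks out; the one point you rightly flag as needing care (that the envy constraints are evaluated on the realized tiny slots with ordinary GSP prices) does hold under the template-indifferent rule. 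The paper's appendix proof is quite different: it uses an asymmetric instance (four class-$A$ bidders with values $100,100-\epsilon,100-2\epsilon,20$; two class-$B$ bidders; templates with slot effects $1,1/2,1/4$ versus $\Delta,\Delta/2,\Delta/4$), shows the envy constraints pin the class-$A$ bids into narrow intervals so that template $2$ must win in any SNE, and then exhibits a violation of the \emph{Nash} half of the definition --- the third class-$A$ bidder can raise his bid to $70$, flipping the selected template to one where his slot effect is larger, without changing his own position or GSP payment. The paper's construction buys two things yours does not: it is a class-selection template set, so it shows the failure persists even in the restricted setting where the preceding positive result (second-highest allocation) holds, and it isolates a qualitatively different pathology (profitable template manipulation by a mid-ranked bidder) that motivates the closing remark about class-selection sets. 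Your version buys economy --- one example serves both GSP variants, as the paper itself hints when it says ``essentially the same examples work'' --- but it does not support that follow-up claim about class-selection sets.
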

The example chosen in the proof is not carefully constructed. The sole essential ingredient is that the winning class $A$ bidders' true values are extremely close in value. This ensures that for every bidder, the range of bids allowed in an SNE is small. In fact, one can show via a similar proof that any class selection set such that every class appears in every template is enough to ensure there exists no SNE --- if the types/valuations of the bidders are chosen carefully.

\section{Discussion}

Our analysis examined trade-offs among different stakeholders in ad auctions. We have shown how to make these trade-offs optimally and how to handle constraints on them. We have also used simulations and real-world auction data to show the effectiveness of this approach, examining the impact of the allowed number of ads on the welfare of the interested parties.  In doing so, we demonstrated the efficacy of a per-impression reserve for this purpose, a tool which had previously been shown to have poor revenue-increasing properties.

We then examined a richer domain, where ads may be shown in several mutually exclusive templates.  Here our results are mostly negative, and we demonstrated the shortcomings of various natural generalizations of GSP.  While we were able to show some positive results, they were for restricted settings and allocation rules that did not achieve optimal trade-offs.

Our results on templates point to several directions for future research. First, are our examples for template auctions under natural GSP generalizations, which show arbitrarily bad or no equilibrium,  representative of real world ad-auctions?  In particular, a few of our constructions rely on a large number of bidders and slots.  Second, can good equilibria be found in ad auctions with templates under less demanding assumptions than we made to achieve our positive results? Finally, could a different generalization or set of assumptions lead to the existence of (approximately) optimal equilibria?

\begin{footnotesize}
\bibliography{REFERENCES}
\end{footnotesize}

\appendix

\section{Proofs of Section \ref{sec-constrained}}
\label{sec:constraint_proof}

Recall that we are assuming:
\begin{itemize}
\item The type space of allowable types $T$ is a closed compact set.
\item The allocation function $x: T \rightarrow \mathbb{R}^n$ is a Lebesgue integrable function $\in L^1$.
\item The constraint feasibility region $\mathcal{X}$ for the allocation function $x$ is convex w.r.t. the underlying function space, namely the  $L^1$  Banach space w.r.t. Lebesgue measure
\end{itemize}
In addition we shall assume that the linear system (w.r.t. the revenue, welfare, and click yield) of constraints is non-degenerate.

En route to proving Theorem \ref{thm:strongL}, we first show that two critical properties hold: weak duality and Lagrangian sufficiency.  We prove the results for equality constraints $A(x)=\theta$  (the same results hold true  for inequality constraints, $A(x) \geq \theta$ with the provision that $\lambda \geq 0$, and can be similarly proved).
The  Lagrangian in this case is 
\begin{equation*}
L(x, \lambda) = G(x) - \lambda^ T (\theta - A(x)).
\end{equation*}
The primal problem is
\begin{equation*}
\Gamma(\theta) = \sup_{x \in \mathcal{X}(\theta)} G(x)
\end{equation*}
where $\mathcal{X}(\theta) = \{x: x \in \mathcal{X} \text{ and } A(x) = \theta\}$ is the feasible set, and the dual is denoted
\begin{equation*}
h(\lambda) = \sup_{x \in \mathcal{X}} L(x, \lambda).
\end{equation*}

\begin{lemma}[Weak Duality]
\label{lem:weak}
For all $\lambda$, $h(\lambda) \geq \Gamma(\theta)$ and in particular for the optimal value of the dual, $h^* = \inf_{\lambda} h(\lambda) \geq \Gamma(\theta)$.
\end{lemma}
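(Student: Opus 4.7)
The plan is to use the standard argument: feasibility for the primal makes the penalty term in the Lagrangian vanish, so the Lagrangian agrees with the primal objective on the feasible set, and extending the supremum to all of $\mathcal{X}$ can only increase the value. The argument takes essentially one line per step, so I expect no real obstacle; the only thing to be careful about is that the statement covers \emph{every} $\lambda$ (not just $\lambda \geq 0$), which is appropriate here because we are in the equality-constrained setting $A(x) = \theta$.

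Concretely, I would proceed as follows. First, fix an arbitrary multiplier $\lambda$ and an arbitrary feasible allocation $x \in \mathcal{X}(\theta)$. By definition of $\mathcal{X}(\theta)$, we have $A(x) = \theta$, hence $\theta - A(x) = 0$, so
\begin{equation*}
L(x, \lambda) \;=\; G(x) - \lambda^T(\theta - A(x)) \;=\; G(x).
\end{equation*}
Second, since $\mathcal{X}(\theta) \subseteq \mathcal{X}$, the point $x$ is admissible in the supremum defining $h(\lambda)$, so
\begin{equation*}
G(x) \;=\; L(x, \lambda) \;\leq\; \sup_{x' \in \mathcal{X}} L(x', \lambda) \;=\; h(\lambda).
\end{equation*}
Taking the supremum over $x \in \mathcal{X}(\theta)$ on the left yields $\Gamma(\theta) \leq h(\lambda)$, which is the first assertion.

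For the second assertion, since the inequality $\Gamma(\theta) \leq h(\lambda)$ holds for every $\lambda$, taking the infimum over $\lambda$ on the right preserves it:
\begin{equation*}
\Gamma(\theta) \;\leq\; \inf_{\lambda} h(\lambda) \;=\; h^*.
\end{equation*}
No convexity, compactness, or regularity of $\mathcal{X}$ or $G$ is used — weak duality is purely a consequence of the definitions — which is why this lemma is set up first, before the more delicate strong-duality argument of Theorem~\ref{thm:strongL}. The analogous statement for inequality constraints $A(x) \geq \theta$ follows by the same computation, restricting to $\lambda \geq 0$ so that the term $-\lambda^T(\theta - A(x))$ is nonnegative on the feasible set.
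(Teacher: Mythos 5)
Your proof is correct and follows exactly the same two observations as the paper's one-line argument: the Lagrangian coincides with $G$ on the feasible set $\mathcal{X}(\theta)$ because the penalty term vanishes, and restricting the supremum from $\mathcal{X}$ to $\mathcal{X}(\theta)$ can only decrease it. The additional remarks about arbitrary $\lambda$ for equality constraints and $\lambda \geq 0$ for inequalities are consistent with the paper's setup.
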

\begin{proof}

      $   h(\lambda) = \sup_{x \in \mathcal{X}} L(x, \lambda) \geq \sup_{x \in \mathcal{X}(\theta)}L(x, \lambda) =  \sup_{x \in \mathcal{X}(\theta)} G(x)=\Gamma(\theta)$
\end{proof}

\begin{lemma}[Lagrangian Sufficiency]
\label{lem:LagSuff}
If there is an $x^*$ and $\lambda^*$ such that
\begin{equation*}
L(x^*, \lambda^*) = \sup_{x \in \mathcal{X}} L(x, \lambda^*)
\end{equation*}
and $x^*$ is feasible for the constrained optimization (i.e. $x^* \in \mathcal{X}(\theta)$) then $x^*$ is optimal.
\end{lemma}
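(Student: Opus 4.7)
The plan is to chain together three simple observations: the hypothesis that $x^*$ maximizes $L(\cdot,\lambda^*)$ over $\mathcal{X}$, the feasibility of $x^*$ (which makes the Lagrangian penalty vanish), and weak duality from Lemma~\ref{lem:weak}. Together these sandwich $G(x^*)$ between $\Gamma(\theta)$ and itself.

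First I would unpack the hypothesis: by definition $h(\lambda^*) = \sup_{x \in \mathcal{X}} L(x,\lambda^*)$, and the assumed equality says $L(x^*,\lambda^*) = h(\lambda^*)$. Next I would use feasibility: since $x^* \in \mathcal{X}(\theta)$ we have $A(x^*) = \theta$, so $\lambda^{*T}(\theta - A(x^*)) = 0$, and therefore $L(x^*,\lambda^*) = G(x^*)$. Combining these two identities gives $G(x^*) = h(\lambda^*)$.

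Now I would apply weak duality (Lemma~\ref{lem:weak}), which yields $h(\lambda^*) \geq \Gamma(\theta)$. Hence $G(x^*) \geq \Gamma(\theta)$. On the other hand, because $x^* \in \mathcal{X}(\theta)$ is feasible for the primal, by definition of the supremum $G(x^*) \leq \Gamma(\theta)$. The two inequalities force $G(x^*) = \Gamma(\theta)$, which is exactly the statement that $x^*$ is optimal for the constrained problem $P_Z(\theta)$.

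There is no real obstacle here: the lemma is the standard ``certificate'' form of Lagrangian sufficiency and transfers verbatim from finite-dimensional convex optimization to the $L^1$ setting, because the argument never uses anything beyond linearity of $A$, linearity of the penalty term $\lambda^{*T}(\theta - A(x))$, and the definitions of feasibility and supremum. The only mild care needed is to note that for the inequality version one additionally requires $\lambda^* \geq 0$ so that the penalty term is non-positive on $\mathcal{X}(\theta) = \{x \in \mathcal{X} : A(x) \geq \theta\}$, preserving the chain $G(x^*) = L(x^*,\lambda^*) \geq L(x,\lambda^*) \geq G(x)$ for every feasible $x$; but for the equality case stated in the lemma, no sign restriction is needed.
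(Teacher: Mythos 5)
Your proof is correct and follows essentially the same argument as the paper: feasibility of $x^*$ makes the penalty term vanish so that $G(x^*)=L(x^*,\lambda^*)$, and then the maximality of $x^*$ for $L(\cdot,\lambda^*)$ dominates $G(x)$ for every feasible $x$. The only cosmetic difference is that you invoke Lemma~\ref{lem:weak} to pass through $h(\lambda^*)\geq\Gamma(\theta)$, whereas the paper inlines the same inequality chain directly against each feasible $x$; your closing remark about needing $\lambda^*\geq 0$ in the inequality-constrained version is also consistent with the paper's treatment.
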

\begin{proof}
For  all  $x \in \mathcal{X}(\theta) $, we have $G(x)=L(x,\lambda)+\lambda^ T (\theta - A(x))$.  Since  $x^* \in \mathcal{X}(\theta)$,
\begin{equation*}
         G(x^*) =L(x^*, \lambda^*) =\sup_{x \in \mathcal{X}} L(x, \lambda^*) \geq L(x,  \lambda^*) =G(x)  \quad \mbox{for all } x \in \mathcal{X}(\theta) 
\end{equation*}
and hence  $x^*$  is optimal  and $G(x^*) =\Gamma(\theta)$.
\end{proof}

We now use the lemmas to prove Theorem \ref{thm:strongL}, restated here.

\begin{theorem}
If $\theta \in \relint(\dom(\Gamma))$, then the optimization problem is strong Lagrangian.
\end{theorem}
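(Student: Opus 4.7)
The strategy is the standard supporting-hyperplane/perturbation argument for strong duality in convex optimization, applied to the value function $\Gamma$. The key observation that makes things tractable despite the infinite-dimensional decision variable $x \in L^1$ is that $\Gamma$ is a function of the finite-dimensional perturbation $\theta \in \mathbb{R}^r$, so all convex-analysis tools needed can be invoked in $\mathbb{R}^r$.

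\textbf{Step 1: concavity of $\Gamma$.} First I would establish that $\Gamma$ is concave on $\dom(\Gamma)$. Fix $\theta_1, \theta_2 \in \dom(\Gamma)$ and $\mu \in (0,1)$. For any $\varepsilon > 0$, pick $x_k \in \mathcal{X}$ with $A(x_k) = \theta_k$ and $G(x_k) \geq \Gamma(\theta_k) - \varepsilon$. Then $x := \mu x_1 + (1-\mu) x_2 \in \mathcal{X}$ by convexity of $\mathcal{X}$, and linearity of $A$ and $G$ gives $A(x) = \mu\theta_1 + (1-\mu)\theta_2$ and $G(x) = \mu G(x_1) + (1-\mu) G(x_2)$. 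Hence $\Gamma(\mu\theta_1 + (1-\mu)\theta_2) \geq \mu\Gamma(\theta_1) + (1-\mu)\Gamma(\theta_2) - \varepsilon$, and letting $\varepsilon \downarrow 0$ gives concavity.

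\textbf{Step 2: supporting hyperplane and extraction of $\lambda^*$.} Since $\Gamma$ is a proper concave function on the finite-dimensional set $\dom(\Gamma) \subseteq \mathbb{R}^r$ and $\theta \in \relint(\dom(\Gamma))$, the classical supporting-hyperplane theorem (applied to the hypograph of $\Gamma$ at $(\theta,\Gamma(\theta))$) yields a \emph{non-vertical} supporting hyperplane: there exists $\lambda^* \in \mathbb{R}^r$ such that
\begin{equation*}
\Gamma(\theta') \leq \Gamma(\theta) + \lambda^{*T}(\theta' - \theta) \qquad \text{for all } \theta' \in \dom(\Gamma).
\end{equation*}
Non-verticality is what allows us to bound $\Gamma$ by a finite affine function of $\theta'$; this is exactly where the $\relint$ hypothesis is used.

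\textbf{Step 3: concluding strong duality.} For any $x \in \mathcal{X}$, set $\theta' = A(x) \in \dom(\Gamma)$. By definition $G(x) \leq \Gamma(A(x))$, and the supporting-hyperplane inequality gives
\begin{equation*}
G(x) \leq \Gamma(\theta) + \lambda^{*T}(A(x) - \theta),
\end{equation*}
i.e.\ $L(x,\lambda^*) = G(x) - \lambda^{*T}(\theta - A(x)) \leq \Gamma(\theta)$. Taking the supremum over $x \in \mathcal{X}$ gives $h(\lambda^*) \leq \Gamma(\theta)$. Combining this with weak duality (Lemma \ref{lem:weak}), which asserts $h(\lambda) \geq \Gamma(\theta)$ for every $\lambda$, we conclude $h(\lambda^*) = \Gamma(\theta) = \inf_\lambda h(\lambda)$, which is precisely strong Lagrangian duality. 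If the supremum defining $h(\lambda^*)$ is attained at some feasible $x^*$, then Lemma \ref{lem:LagSuff} identifies $x^*$ as primal optimal, matching the full chain of equalities in the definition.

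\textbf{Main obstacle.} The only genuinely delicate point is establishing non-verticality of the supporting hyperplane; everything else (concavity of $\Gamma$, the algebraic rearrangement to obtain the Lagrangian bound, and the appeal to weak duality) is essentially bookkeeping. Non-verticality would fail if $\theta$ lay on the relative boundary of $\dom(\Gamma)$, because then one could have a hyperplane of the form $c^T(\theta' - \theta) \leq 0$ supporting $\dom(\Gamma)$ itself without bounding $\Gamma$. The relative-interior hypothesis precisely rules this out, and because $\dom(\Gamma) \subseteq \mathbb{R}^r$ is finite-dimensional we can invoke the finite-dimensional separation theorem rather than needing any Hahn--Banach machinery in $L^1$.
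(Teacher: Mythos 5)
Your proposal is correct and follows essentially the same route as the paper: establish concavity of the value function $\Gamma$ from convexity of $\mathcal{X}$ and linearity of $G$ and $A$, obtain a non-vertical supporting hyperplane at $\theta \in \relint(\dom(\Gamma))$, and combine the resulting bound $h(\lambda^*) \leq \Gamma(\theta)$ with weak duality. The only (welcome) refinement is your $\varepsilon$-argument in Step 1, which avoids the paper's implicit assumption that the suprema defining $\Gamma(\theta^1)$ and $\Gamma(\theta^2)$ are attained.
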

\begin{proof}
\begin{enumerate}
\item $G$ is concave (indeed, it is a linear functional), the constraint set $\mathcal{X}$ is convex, and the constraints involve a vector of linear functionals, hence we can show directly that $\Gamma$ is concave.\\
Assuming $\dom(\Gamma) \neq \varnothing$, take $\theta^1, \theta^2 \in \relint(\dom(\Gamma))$. Then there exists $x^1, x^2$ such that $G(x^i) = \Gamma(\theta^i)$ for $i \in \{1, 2\}$; these are feasible: $x^i \in \mathcal{X}(\theta^i)$. Now for any arbitrary $\lambda \in (0, 1)$ let $\tilde{\theta} = \lambda\theta^1 + (1 - \lambda)\theta^2$ and similarly $\tilde{x} = \lambda x^1 + (1 - \lambda)x^2$. By convexity of $\mathcal{X}$ and the linearity of constraints $A(x) = \theta$, we then have $\tilde{x} \in \mathcal{X}(\tilde{\theta})$. Thus:
\begin{align*}
\Gamma(\tilde{\theta})
&= \sup_{x \in \mathcal{X}(\tilde\theta)} G(x)\\
&\geq G(\tilde{x})\\
&= G(\lambda x^1 + (1 - \lambda)x^2)\\
&= \lambda_1 G(x^1) + (1 - \lambda)G(x^2)\\
&= \lambda \Gamma(\theta^1) + (1 - \lambda)\Gamma(\theta^2)
\end{align*}
and so $\Gamma$ is concave.

\item Since $\Gamma$ is concave, for $\theta \in \relint(\dom(\Gamma))$, there is a non-vertical  supporting hyperplane (NVSH) to $\Gamma$ at $\theta$  (which means that the hypograph of $\Gamma$ has a supporting hyperplane), which is equivalent to the problem being strong Lagrangian.
Note that $\Gamma: \theta \rightarrow \bR$ is defined on Euclidean space.   That such a problem is strong Lagrangian for $\Gamma(b)$ if and only if there is a NVSH at $b$ is a well known theorem, and simple to prove directly. For the ``if'' direction: suppose there exists a NVSH at $b$, then there is a $\lambda \in \bR^{\rho(A)}$ such that
\begin{equation*}
  \Gamma(c) \leq \Gamma(b) + \lambda \cdot (c-b)   \quad \mbox{for all }  c \in \bR^{\rho(A)}.
\end{equation*}
Hence
\begin{align*}
   \Gamma(b) & \geq   \sup_{ c \in \bR^{\rho(A)}} \left [\Gamma(c) -  \lambda \cdot (c-b)  \right ]\\
       & = \sup_{ c \in \bR^{\rho(A)}} \sup_{x \in \mathcal{X}(c)} \left [ G(x) - \lambda \cdot (A(x) - b)\right]\\
      & = \sup_{x \in \mathcal{X}} L(x,\lambda) \\
      & = h(\lambda).
\end{align*}
But from Weak Duality (Lemma \ref{lem:weak}) $\Gamma(b) \leq h(\lambda)$, and thus $\Gamma(b)=h(\lambda)$ and the problem is strong Lagrangian.  The ``only if'' is similarly proved by reversing the argument.
\end{enumerate}
\end{proof}


When the problem is strong Lagrangian, then $h^*=\Gamma(\theta)$, and complementary slackness applies.

We can now prove the equivalent of Slater's constraint qualification to derive  sufficient conditions to have $\theta \in \relint(\dom(\Gamma))$.   This is Corollary \ref{cor:Slater} which we restate and prove.  

\begin{corollary}
\label{cor:Slater}
  When $Z=\{z\in \bR^{\rho(A)}: \leq 0\} $ is the non-positive cone, the problem $X_{Z}(\theta)$, is strong Lagrangian if there is some $\tilde{x} \in \mathcal{X}$ such that $A(\tilde{x})>\theta$.  If $Z={0}$, the problem  $X_{Z}(\theta)=X(\theta)$ is strong Lagrangian provided there is some $\tilde{x} \in \relint(\mathcal{X})$ such that $A(\tilde{x})=\theta$.
\end{corollary}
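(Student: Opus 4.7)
The plan is to deduce both statements from Theorem~\ref{thm:strongL} by verifying, in each case, that $\theta$ lies in $\relint(\dom(\Gamma))$. Once that is established, strong Lagrangianness follows immediately.

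For the inequality case, I would first unpack $\dom(\Gamma)$ as $\{\theta' : \exists\, x \in \mathcal{X},\ A(x) \geq \theta'\}$, since requiring $A(x) + z = \theta'$ for some $z \in Z$ (the non-positive cone) is equivalent to $A(x) \geq \theta'$. Given $\tilde{x} \in \mathcal{X}$ with $A(\tilde{x}) > \theta$ strictly in every component, every $\theta'$ in a sufficiently small Euclidean ball around $\theta$ still satisfies $A(\tilde{x}) \geq \theta'$, so $\tilde{x}$ certifies $\theta' \in \dom(\Gamma)$. Thus $\theta$ lies in the (ordinary) interior of $\dom(\Gamma)$, which is contained in its relative interior, and Theorem~\ref{thm:strongL} applies.

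For the equality case $Z = \{0\}$, we have $\dom(\Gamma) = A(\mathcal{X})$. I would show that if $\tilde{x} \in \relint(\mathcal{X})$ with $A(\tilde{x}) = \theta$, then $\theta \in \relint(A(\mathcal{X}))$. The non-degeneracy assumption on the constraint system guarantees that the range of $A$ has full rank $\rho(A)$, so I can select directions $\delta_1,\dots,\delta_{\rho(A)} \in L^1$ whose images $A(\delta_1),\dots,A(\delta_{\rho(A)})$ form a basis of $\mathbb{R}^{\rho(A)}$. Because $\tilde{x}$ lies in the relative interior of the convex set $\mathcal{X}$, the perturbed points $\tilde{x} + \sum_k c_k \delta_k$ remain in $\mathcal{X}$ for all sufficiently small coefficients $c_k$. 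Applying $A$ to this family sweeps out a full-dimensional neighbourhood of $\theta$ in $\mathbb{R}^{\rho(A)}$, which places $\theta$ in $\relint(A(\mathcal{X})) = \relint(\dom(\Gamma))$, and Theorem~\ref{thm:strongL} again finishes the argument.

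The main obstacle I anticipate is the rigour of the equality case in the infinite-dimensional setting. In finite dimensions the identity $A(\relint(C)) \subseteq \relint(A(C))$ for linear $A$ and convex $C$ is standard, but when $\mathcal{X} \subseteq L^1$ one has to be explicit about which perturbation directions $\delta_k$ are admissible — one must verify that they can be chosen within the linear span of $\mathcal{X} - \tilde{x}$ so that the perturbed points stay in $\mathcal{X}$. The non-degeneracy hypothesis is doing exactly the work of ensuring such directions exist; without it $A(\mathcal{X})$ might live in a proper affine subspace of $\mathbb{R}^{\rho(A)}$ and one would need to reinterpret "relative interior" accordingly. Once this technical step is handled, invoking Theorem~\ref{thm:strongL} completes the proof in both cases.
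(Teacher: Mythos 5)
Your proposal is correct and follows essentially the same route as the paper: both deduce the corollary from Theorem~\ref{thm:strongL} by verifying that the stated conditions place $\theta$ in $\relint(\dom(\Gamma))$. The paper dismisses this verification with ``clearly,'' whereas you supply the missing details --- the open ball of dominated right-hand sides in the inequality case, and the finitely many perturbation directions chosen in the span of $\mathcal{X}-\tilde{x}$ in the equality case --- together with the correct caveat about how relative interior and the non-degeneracy assumption interact in the infinite-dimensional setting.
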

\begin{proof}
When the conditions are satisfied, then clearly $\theta \in \relint(\dom(\Gamma))$, and hence the result.  Note that we do require  the condition when $Z=0$ because we have folded the constraints into $\mathcal{X}$, which means that is possible that $\theta \in \bdd \phi$ because $x \in \bdd \mathcal{X}$; when on the boundary, the separating hyperplane for $\Gamma$ at $\theta$ may be vertical, in which case  problem is not strong Lagrangian.  The condition when $Z=0$ precludes this from happening.
\end{proof}
%
%
%

\section{Proof of Theorem~\ref{thm:tcgsp_unoptimal}}
\label{sec:tcgsp_unoptimal_proof}

\begin{proof}
Suppose we have the following single-class setup where all the $w_i = 1$ and we are maximizing welfare.
\begin{itemize}
\item Bidders:
\begin{itemize}
\item $2m$ bidders with true values $m/2 - \epsilon, m/2 - 2\epsilon, ..., m/2 - (2m)\epsilon$.
\item $m^2$ bidders with true values $1 - \epsilon, 1 - 2\epsilon, ..., 1 - m^2\epsilon$.
\end{itemize}
\item Templates:
\begin{itemize}
\item Template 1: $2m + m^2 - 1$ slots of effect $1 - \epsilon, 1 - 2\epsilon, ..., 1 - 2m\epsilon, \epsilon, \epsilon^2, ..., \epsilon^{m^2 - 1}$.
\item Template 2: $2m + m^2 - 1$ slots of effect $x - \epsilon, x - 2\epsilon, ..., x - (2m + m^2 - 1)\epsilon$ where $x = 2(2m - 1)/(2m + m^2)$ which is equivalent to $x \approx 4/m$ for large $m$.
\end{itemize}
\end{itemize}
We first show that the first template will always win in any NE of the truthful auction for sufficiently large $m$ and small $\epsilon$\footnote{Specifically, we mean taking a large $m$ and \emph{then} taking a small $\epsilon$. For our purposes, it will suffice if $\epsilon \ll 1/m^2$.}. Recall that in an NE of the truthful auction, a bidder will always bid higher than the true value of the next highest bidder (as otherwise the next highest bidder would outbid them). As we are also assuming bidders cannot bid higher than their true value this implies that the first bidder (i.e. the bidder with valuation $m/2 - \epsilon$) will have bid $b_1 \in [m/2 - \epsilon, m/2 - 2\epsilon]$. More generally, bidder $i < 2m + m^2$ will bid $b_i \in [t_{i - 1}, t_i]$. We therefore find that the calculated objective of the first template subtracted by the second is (where calculations are in the limit for large $m$ and small $\epsilon$):
\begin{align*}
&(b_1 + b_2 + ... + b_{2m}) - (b_1 + b_2 + ... + b_{2m + m^2 - 1})x\\
&= (b_1 + b_2 + ... + b_{2m - 1})(1 - x) + b_{2m}(1 - x) - (b_{2m + 1} + b_{2m + 2} + ... + b_{2m + m^2 - 1})x\\
&= (2m - 1)(m/2)(1 - x) + b_{2m}(1 - x) - (m^2 - 1)x\\
&> (2m - 1)(m/2)(1 - x) - (m^2 - 1)x\\
&= m^2\\
&> 0.
\end{align*}
As this is greater than zero, the first template will always win for sufficiently large $m$ and small $\epsilon$.

We now show that there exists a problematic SNE when we use the template-considerate GSP payment rule where the second template wins. Suppose we have an SNE on the second template, ignoring the first. Assuming $\epsilon \ll x$, the slots in the second template are nearly identical. Thus, if we ignore the first template, the bidders will only bid enough to ensure they win a spot --- that is, enough to outbid the last bidder who has a true value of $1 - m^2\epsilon \approx 1$; with the caveat that the first bidder may bid up to their true valuation. We now claim this is an SNE overall, i.e. when we consider this in the context of {\em both} templates. A bidder who wishes to change to the first template will raise their bid to at most their true value. Let us consider the objective values of the two templates after this maximum unilateral deviation. The first template will have an objective value of at most $\approx 2(m/2) + (m/2 - 2)$ if we assume that one bidder is already bidding their true value. For large $m$, this is approximately $3m/2$. On the other hand, the second template will have objective value of at least $\approx (2m + m^2 - 1)x$ if all but the last bidder bid approximately $1$. For large $m$ this is approximately $4m$. As $4m > 3m/2$, we find that no unilateral deviation can switch the template and so we have an SNE overall.

Now note that this SNE has a welfare of $\approx ((2m)(m/2) + (m^2))x \in \theta(m)$. On the other hand, all NE of the truthful auction must use the first template, so their welfare is $\approx (2m)(m/2) \in \theta(m^2)$. Thus, this SNE is arbitrarily worse than any NE of the truthful auction.
\end{proof}

\section{Proof of Theorem~\ref{thm:tigsp_nonexistence}}
\label{sec:tigsp_nonexistence_proof}

\begin{proof}
Suppose we are maximizing welfare, all bidders have ad-effect $w_i = 1$ and we have the following setup.
\begin{itemize}
\item Classes $A$ and $B$.
\item Bidders:
\begin{itemize}
\item Four class $A$ bidders with true values $100$, $100 - \epsilon$, $100 - 2\epsilon$, and $20$.
\item Two class $B$ bidders with true values $150$ and $135$.
\end{itemize}
\item Templates ($\Delta \in (0, 1)$):
\begin{itemize}
\item Template $1$: three class $A$ slots of effects $1$, $1/2$, and $1/4$, and a class $B$ slot effect of $\Delta$.
\item Template $2$: three class $A$ slots of effects $\Delta$, $\Delta/2$, and $\Delta/4$, and a class $B$ slot effect of $1$.
\end{itemize}
Note that if $\Delta \approx 0$ we have a situation similar to our guiding example that is the image/text auction.
\end{itemize}
Let us consider the general properties of any SNE. Regarding class $B$ bidders, it is clear that the second bidder will always bid his true value of $135$, and the first bidder will outbid him. Now let us consider the far more complex class $A$ bidders. Let $b_i$ denote the bid of class $A$'s bidder $i$. Clearly $b_4 = 20$ as he will not win a slot and must therefore bid his true value to ensure he is not envious. We further find that:
\begin{itemize}
\item As the third bidder must not envy the second bidder:
\begin{align*}
&(100 - 2\epsilon - b_4)/4 \geq (100 - 2\epsilon - b_3)/2\\
&\Rightarrow b_3 \geq 60 - \epsilon.
\end{align*}
\item As the second bidder must not envy the third bidder:
\begin{align*}
&(100 - \epsilon - b_3)/2 \geq (100 - \epsilon - b_4)/4\\
&\Rightarrow b_3 \leq 60 - \epsilon/2.
\end{align*}
\item As the second bidder must not envy the first bidder:
\begin{align*}
&(100 - \epsilon - b_3)/2 \geq 100 - \epsilon - b_2\\
&\Rightarrow b_2 \geq 50 + b_3/2 - \epsilon/2.
\end{align*}
\item As the first bidder must not envy the second bidder:
\begin{align*}
&100 - b_2 \geq (100 - b_3)/2\\
&\Rightarrow b_2 \leq 50 + b_3/2.
\end{align*}
\end{itemize}
Together, these imply that:
\begin{align*}
b_3 &\in [60 - \epsilon, 60 - \epsilon/2],\\
b_2 &\in [50 + b_3/2 - \epsilon/2, 50 + b_3/2]\\
&\subseteq [50 + (60 - \epsilon)/2 - \epsilon/2, 50 + (60 - \epsilon/2)/2]\\
&= [80 - \epsilon, 80 - \epsilon/4].
\end{align*}
We now claim that in any SNE the second template is selected via the standard selection rule. To see this, let us consider the objective value of the first template subtracted by the second (when we assume $b_1 = b_2$).
\begin{align*}
&(b_2 + b_2/2 + b_3/4 + \Delta 135) - (\Delta(b_2 + b_2/2 + b_3/4) + 135)\\
&= (1 - \Delta)(b_2 + b_2/2 + b_3/4 - 135)\\
&\leq (1 - \Delta)(3(80 - \epsilon/4)/2 + (60 - \epsilon/2)/4 - 135)\\
&= -(1 - \Delta)(\epsilon/2)\\
&< 0
\end{align*}
where the first inequality comes from choosing $b_3$ and $b_2$ maximally. We next claim that for $\epsilon$ sufficiently small the third bidder can raise his bid $b_3$ so that it is beneficial to him in any SNE by causing the first template to be selected --- which would imply that there exists no SNE. To see this, let us consider the objective value of the first template subtracted by the second in an SNE when the third bidder unilaterally deviates and raises his bid to $70$.
\begin{align*}
&(b_2 + b_2/2 + 70/4 + \Delta 135) - (\Delta(b_2 + b_2/2 + 70/4) + 135)\\
&= (1 - \Delta)(3b_2/2 + 70/4 - 135)\\
&\geq (1 - \Delta)(3b_2/2 + 70/4 - 135)\\
&\geq (1 - \Delta)(3(80 - \epsilon)/2 + 70/4 - 135)\\
&\geq (1 - \Delta)(5/2 - 3\epsilon/2)\\
&> 0
\end{align*}
where the last inequality holds for sufficiently small $\epsilon$. As $70 < b_2$ for sufficiently small $\epsilon$, the third bidder will remain in the third position and his payment will be unaffected. Therefore, the third bidder will have incentive to raise his bid and so we cannot have an SNE.
\end{proof}

\end{document}